\begin{document}

\newtheorem{lemma}{Lemma}
\newtheorem{theorem}[lemma]{Theorem}
\newtheorem{informaltheorem}[lemma]{Informal Theorem}
\newtheorem{informallemma}[lemma]{Informal Lemma}
\newtheorem{corollary}[lemma]{Corollary}
\newtheorem{definition}[lemma]{Definition}
\newtheorem{proposition}[lemma]{Proposition}
\newtheorem{question}{Question}
\newtheorem{problem}{Problem}
\newtheorem{remark}[lemma]{Remark}
\newtheorem{claim}{Claim}
\newtheorem{fact}{Fact}
\newtheorem{challenge}{Challenge}
\newtheorem{observation}{Observation}
\newtheorem{openproblem}{Open Problem}
\newtheorem{openquestion}{Open question}
\newenvironment{LabeledProof}[1]{\noindent{\bf Proof of #1: }}{\qed}

\newcommand{\Gtensor}[0]{G \times G} 
\newcommand{\D}[0]{\mathcal{D}} 
\newenvironment{proofsketch}{\trivlist\item[]\emph{Proof Sketch}:}%
{\unskip\nobreak\hskip 1em plus 1fil\nobreak$\Box$
\parfillskip=0pt%
\endtrivlist}

\newcommand{\beq}{\begin{equation}}
\newcommand{\eeq}{\end{equation}}
\newcommand{\beas}{\begin{eqnarray*}}
\newcommand{\eeas}{\end{eqnarray*}}

\newcommand{\poly}{\mathrm{poly}}
\newcommand{\eps}{\epsilon}
\newcommand{\e}{\epsilon}
\newcommand{\E}{\mathbb{E}}
\newcommand{\polylog}{\mathrm{polylog}}
\newcommand{\rob}[1]{\left( #1 \right)} 
\newcommand{\sqb}[1]{\left[ #1 \right]} 
\newcommand{\cub}[1]{\left\{ #1 \right\} } 
\newcommand{\rb}[1]{\left( #1 \right)} 
\newcommand{\abs}[1]{\left| #1 \right|} 
\newcommand{\zo}{\{0, 1\}}
\newcommand{\zonzo}{\zo^n \to \zo}
\newcommand{\zokzo}{\zo^k \to \zo}
\newcommand{\zot}{\{0,1,2\}}
\newcommand{\stirlingtwo}[2]{\genfrac{\{}{\}}{0pt}{}{#1}{#2}}
\newcommand{\en}[1]{\marginpar{\textbf{#1}}}
\newcommand{\efn}[1]{\footnote{\textbf{#1}}}

\newcommand{\junk}[1]{}
\newcommand{\prob}[1]{\Pr\left[ #1 \right]}
\newcommand{\expt}[1]{\mbox{E}\sqb{#1}}
\newcommand{\expect}[1]{\mbox{E}\left[ #1 \right]}

\newcommand{\md}[1]{\delta_{#1}}                            
\newcommand{\gf}[1]{G_{#1}}                                     
\newcommand{\nei}[3]{N^{#2}_{#1}\rb{#3}}                
\newcommand{\dg}[2]{d_{#1}\rb{#2}}                          
\newcommand{\dgi}[3]{d_{#1}\rb{#2,#3}}                   
\newcommand{\prb}[2]{p_{#1,#2}}

\newcommand{\BfPara}[1]{\noindent {\bf #1}.}

\newcommand{\shortOnly}[1]{\ifthenelse{\boolean{short}}{#1}{}}
\newcommand{\onlyShort}[1]{\ifthenelse{\boolean{short}}{#1}{}}
\newcommand{\longOnly}[1]{\ifthenelse{\boolean{short}}{}{#1}}
\newcommand{\onlyLong}[1]{\ifthenelse{\boolean{short}}{}{#1}}
\newboolean{long}
\setboolean{long}{true}
\clubpenalty=10000
\widowpenalty = 10000

\title{Better bounds for coalescing-branching random walks}
\numberofauthors{3} 
%
\author{
%
%
\alignauthor
Michael Mitzenmacher \titlenote{
This work was
supported in part by NSF grants CNS-1228598, CCF-1320231, and CCF-1535795.} \\
       \affaddr{Harvard University}\\
       \email{michaelm@eecs.harvard.edu}
\alignauthor Rajmohan Rajaraman\titlenote{
       Supported in part by NSF CCF-1422715, NSF CCF-1535929, and an
       ONR grant} \affaddr{Northeastern
  University}\\ \email{rraj@ccs.neu.edu}
\and \alignauthor Scott Roche \titlenote{ Supported in part by NSF
  CCF-1216038 and NSF CCF-1422715.} \\ \affaddr{Northeastern
  University and Akamai Technologies}\\ \email{str7@cornell.edu} }

\maketitle
\begin{abstract}
Coalescing-branching random walks, or {\em cobra walks} for short, are
a natural variant of random walks on graphs that can model the spread
of disease through contacts or the spread of information in networks.
In a $k$-cobra walk, at each time step a subset of the vertices are
active; each active vertex chooses $k$ random neighbors (sampled
indpendently and uniformly with replacement) that become active at the
next step, and these are the only active vertices at the next step.  A
natural quantity to study for cobra walks is the cover time, which
corresponds to the expected time when all nodes have become infected
or received the disseminated information.  \junk{Suprisingly little is
  known about the cover time for cobra walks.}

In this work, we extend previous results for cobra walks in multiple
ways.  We show that the cover time for the 2-cobra walk on $[0,n]^d$
is $O(n)$ (where the order notation hides constant factors that depend
on $d$); previous work had shown the cover time was $O(n \cdot \polylog(n))$.
We show that the cover time for a 2-cobra walk on an $n$-vertex
$d$-regular graph with conductance $\phi_G$ is $O(\phi_G^{-2} \log^2
n)$, significantly generalizing a previous result that held only for
expander graphs with sufficiently high expansion.  And finally we show
that the cover time for a 2-cobra walk on a graph with $n$ vertices is
always $O(n^{11/4} \log n)$; this is the first result showing that the
bound of $\Theta(n^3)$ for the worst-case cover time for random walks
can be beaten using 2-cobra walks.
\end{abstract}

\category{G.3}{Probability and Statistics}{Stochastic processes; Probabilistic Algorithms}
\category{G.2.2}{Discrete Mathematics: Graph Theory}{Graph algorithms} 

\terms{Algorithms, Theory}

\keywords{Random Walks, Networks, Information Spreading, Cover Time, Epidemic Processes}

\section{Introduction}
Random walks provide a fundamental mathematical model for many basic
network processes.  In disease models, transmission of a virus can be
modeled by the virus moving according to a random walk on a graph
representing a human contact network; computer viruses can be modeled
similarly~\cite{GANESH,PIET,Dutta:2015:CRW:2821462.2817830}.  Variants of random
walks can also be used for information dissemination, using
message-passing protocols where a message is passed from neighbor to
neighbor via a random walk~\cite{sicomp,feige-rumor}.  Such protocols
require little state information and are robust to various types of
faults, and are therefore useful in many distributed networks~\cite{7354403}.   More
generally, random walks provide a fundamental primitive for network
algorithms for information propagation, search, routing, and load
balancing.

In many of these settings, a central measure of interest is the cover
time, the expected time for the random walk to cover all of the
vertices of the underlying network.  In disease models, this
corresponds to the time until all vertices in the network have been
exposed to the virus; in message-passing protocols, this corresponds
to the time until all vertices have received the message.

Parallel random walks provide a natural generalization of standard
random walks, with multiple random walks traversing the network
simultaneously, and several papers have analyzed the performance of
parallel random walks (as we describe in the related work section).  A
related variant, less well-studied and understood, are
coalescing-branching random walks, or {\em cobra walks} for
short~\cite{Dutta:2015:CRW:2821462.2817830}.  In a cobra walk, at each
time step, a subset of the vertices are active; typically in the
initial state a single vertex would be active.  At each time step,
each active vertex chooses $k$ random neighbors (sampled independently
and uniformly with replacement) that become active at the next step.
A vertex is active at step $t$ if and only if it was chosen by an
active vertex in the previous step.  When $k > 1$, each walk branches
at that step into multiple walks, but multiple walks then coalesce
when they reach the same vertex at the same time.  We refer to a cobra
walk where at each time step a vertex chooses $k$ active neighbors as
a $k$-cobra walk for convenience.  (One could further study variations
where the branching varied based on the vertex or the time step, or
was governed by a random distribution; we do not do that here.)

As examples of cobra walks, in the message passing setting, a
$k$-cobra walk corresponds to a network where a vertex may send $k$
outgoing copies of the message to neighbors during a time step instead
of just one.  In disease networks, a cobra walk corresponds to an
idealized process within the Susceptible Infected Susceptible model
(or SIS model): in each time step, an infected agent infects $k$
random neighbors and recovers, but can be infected again (including at
the next time step).

In contrast to results in parallel random walks, where the number of
walks is a parameter, in cobra walks the number of active vertices
varies over time and its behavior depends significantly on the
network.  One might expect cobra walks to yield significant
improvements in the cover time over standard random walks, based on
their power to reproduce, even if limited by coalescence.  The goal of
our work is to formally and theoretically bound the performance of
cobra walks, focusing on the cover time.  While our results have
potential applications to distributed protocols and disease models, as
suggested above, we also believe that $k$-cobra walks are a natural
mathematical model worthy of study in their own right.

\subsection{Our Results and Techniques}
We are motivated by the prior work
\cite{Dutta:2015:CRW:2821462.2817830}, which obtained bounds on the
cover time of cobra walks on trees, grids, and expanders.  Our work
pushes those results further, in several directions.  Our primary
results are the following:
\begin{itemize}
\item We show that the cover time for the 2-cobra walk on $[0,n]^d$ is
  $O(n)$, where the constant in the order notation can depend on $d$.
  This improves on the previous bound of $O(n\cdot \polylog(n))$
  \cite{Dutta:2015:CRW:2821462.2817830}.  With respect to $n$, our
  result is optimal.
\item We show that the cover time for the 2-cobra walk for a
  $d$-regular graph with conductance $\phi_G$ is $O(\phi_G^{-2} \log^2
  n)$.  This generalizes a similar result in
  \cite{Dutta:2015:CRW:2821462.2817830} for expander graphs with
  sufficiently high expansion.  Our new result holds for any
  $d$-regular graph, and expresses the bound as a function of the
  conductance.
\item We provide a result for general graphs, showing that the cover
  time for a 2-cobra walk is always $O(n^{11/4} \log n)$.  For
  standard random walks, there are graphs where the cover time is
  $\Theta(n^3)$.  This is the first result showing that cobra walks
  can beat the corresponding worst-case bound for random walks.  We
  also establish an $O(n^{2-1/d}\log n)$ upper bound on the cover time
  for arbitrary $d$-regular graphs, again improving the tight
  quadratic bound for standard random walks.
\end{itemize}

Our main techniques involve making use of the parallelism inherent in
cobra walks, and by thinking of cobra walks as a {\em union of biased
  random walks}.  In some settings we can show the cobra walk goes
through an initial phase that instantiates a large number of
essentially parallel random walks, and then analyze the behavior of
these random walks.  Here we have to take care of the dependency
challenges introduced by coalescing, as random walks can essentially
disappear when several collide at a vertex.  In other settings, we
think of our cobra walk as being a single walk moving toward a
specific vertex, and then eventually taking a union bound over all
vertices.  At each step, we can choose to follow the active vertex
that moves toward the target vertex and discard the others.  This
approach simplifies the analysis by allowing us to focus on a single
walk, where now the $k$ choices correspond to a bias in the walk that
we can model.  The downside is such an analysis, however, is that it
does not take full advantage of the power of parallelism inherent to
cobra walks.

\subsection{Background and Related Work}

The cobra walk process has structural similarities with several other fairly diverse stochastic processes: branching processes, gossip protocols, and random walks (including parallel random walks, coalescing random walks, and other variants).  Despite these commonalities, cobra walks resist being fully described by any of these other processes; furthermore, analysis techniques used for these other processes often have no clear use or power in analyzing cobra walks.

\noindent{\bf Branching and coalescing processes}. Branching processes appear in many disciplines, from nuclear physics to population genetics, on various discrete and continuous structures~\cite{MR0163361,Madras1992255,benjamini2010trace}.  Another related topic is the study of coalescing processes and voter models (see, for example~\cite{cooper2012coalescing}). Naturally, there is also work on processes that contain both branching and coalescing elements~\cite{arthreya2005branching,sun2008brownian}, although unlike our work these analyses tend to operate in continuous time and on either restricted topologies or infinite spaces. The restriction of cobra walks to discrete time effectively disallows the use of differential-equation based analysis that often yields results in continuous-time processes (see, for example~\cite{GANESH,KES,PIET}). 

\noindent{\bf Gossip and rumor-spreading mechanisms} Gossip-based
algorithms have been used successfully to design efficient distributed
algorithms for a variety of problems in networks such as information
dissemination, aggregate computation, constructing overlay topologies,
and database synchronization (e.g., see \cite{sicomp} and the
references therein). There are three major variants of gossip-based
processes: push-based models, in which the members of the set of
informed vertices each select a neighbor and inform that neighbor (if
it is not already informed), pull-based models, in which uninformed
vertices select neighbors and poll them for information, and
push-pull, which is a combination of the first two.

Cobra walks bear the closest resemblance to push-based gossip
models. Indeed,~\cite{feige-rumor} show that the push process
completes in every undirected graph in $O(n \log n)$ steps with high
probability, and this bound has been conjectured to hold for cobra
walks \cite{Dutta:2015:CRW:2821462.2817830}. However, the similarity
between the two is in many ways superficial. If we view a gossip
process as a Markov chain on the state space of the set of all subsets
of vertices (representing the sets of possible informed vertices),
this Markov chain has a single absorbing state (assuming the graph is
connected) in which every vertex is informed. On the other hand,
performing a similar projection of a cobra walk onto a Markov chain of
the $2^n$ possible subsets of vertices that could be active at any
time, we see that there is no absorbing state and with the addition of
self-loops, the chain can be made ergodic.  

\noindent{\bf Random walks and parallel random walks.}  Cobra walks
also resemble standard random walks and parallel variants. For simple
random walks, the now classic work of Feige \cite{feige1,feige2}
showed that the cover time on any graph lies between $\Theta(\log n)$
and $O(n^3)$.  A formal model of biased random walks was introduced
in~\cite{azar} with the motivation of studying imperfect sources of
randomness.  Specifically, these biased walks allow a controller to
fix, at each step of the walk, the next step with a small probability,
with the aim of increasing the stationary probability at a target set
of vertices.  A variant of the biased walk of~\cite{azar} plays a
significant role in our analysis of cobra walks for general graphs.

Additional work has considered speeding up the cover time by modifying
the underlying process.  Adler et al~\cite{AHKV03} studied a process
on the hypercube in which in each round a vertex is chosen uniformly
at random and covered; if the chosen vertex was already covered, then
an uncovered neighbor of the vertex is chosen uniformly at random and
covered.  For any $d$-regular graph, Dimitrov and Plaxton showed that
a similar process achieves a cover time of $O(n + (n \log
n)/d)$~\cite{DP05}.  For expander graphs, Berenbrink et al\ showed a
simple variant of the standard random walk that achieves a linear
(i.e., $O(n)$) cover time~\cite{berenbrink}.

Parallel random walks, first studied in~\cite{broder} for the special
case where the starting vertices are drawn from the stationary
distribution and in~\cite{AAKKLT} for arbitrary starting vertices,
also appear related to cobra walks. 
Nearly-tight results on the speedup of cover time as a function of the
number of parallel walks have been obtained by~\cite{ElsasserS09} for
several graph classes including the cycle, $d$-dimensional meshes,
hypercube, and expanders. 
However, again the similarity is somewhat 
superficial. A parallel random walk with $k$ independent walks can be mapped
to a undirected random walk on a graph known as the tensor product. 
As such, much of the machinery of the analysis of simple random walks can be
applied to the parallel case. Applying a similar approach to a cobra walk is not
feasible, although for cobra walks one can convert the tensor product into a directed graph,
changing the topology significantly.  As such, generally the techniques
that can be used for parallel random walks cannot be used directly for the cobra walk. Indeed, one can view the dependencies on the positions of the other pebbles in a cobra walk as a manifestation of this difference. Cobra walks suffer from
the ``time's arrow'' effect: locally, most individual steps are
reversible, but as a cobra walk expands, the likelihood that it will
coalesce back to a single vertex grows exponentially unlikely.
Despite these difficulties, the tensor product graph can be useful when studying the movement of a small number of pebbles in a cobra walk, and we make use of this technique in obtaining a general 
bound based on conductance.

\section{Preliminaries} 

Let $G$ be a connected graph with vertex set $V$ and edge set $E$, and let $|V|=n$, except
for the case when we are analyzing the grid, in which case we let $V = [0,n]^d$. A $k$-coalescing-branching ($k$-cobra)
walk is defined as follows: It starts at time $t=0$ at an arbitrary vertex $v$, at which a pebble is
placed. In the next and every subsequent time step, every pebble in $G$ clones itself $k-1$ times
(so that there are now $k$ indistinguishable pebbles at each vertex that originally had a pebble). 
Each pebble then independently selects a neighbor of its current vertex uniformly at random
and moves to it. Once all pebbles have made their moves, the coalescing phase begins: if 
two or more pebbles are at the same vertex they coalesce into a single pebble, and the next round begins.

For time step $t$, $S_t$ is the \textbf{active set}, which is the set
of all vertices of $G$ that have a pebble. Define the \textbf{cover
  time} of a cobra walk to be the maximum over all vertices $v$ of
expectation of the minimum time $T$ at which all vertices have
belonged to some $S_t$ for $t \leq T$ when the cobra walk is started
at $v$.  We note that while our results are stated as bounds on the
cover time, all of the results in this paper actually give bounds on
the time to cover all the vertices in the graph with high probability,
as is clear from the proofs.  Hence we may also refer to the time at which
all vertices have been covered, where the meaning is clear.
The \textbf{hitting time} $H(u,v)$ is
the expectation of the minimum time it takes for any pebble
originating from a cobra walk that starts at $u$ to reach $v$.  The
\textbf{maximum hitting time} $h_{\max}$ is $max_{u,v \in V} H(u,v)$.

We make use of an extension of Matthews' Theorem, which relates the cover time
of a random walk to the maximum hitting time. The following theorem was 
proven in~\cite{Dutta:2015:CRW:2821462.2817830}:
\begin{theorem}
\label{matthewscobra}
Let $G$ be a connected graph on $n$ vertices. Let $W$ be a cobra walk on $G$ starting at
an arbitrary vertex. Then the cover time of $W$ on $G$ is bounded above by 
$O(h_{\max} \log n)$;  in fact $W$ covers all of $G$ in $O(h_{\max} \log n)$ steps
with high probability.  
\end{theorem}

Finally, we make use of a combinatorial property of the graph, the
conductance. Define the conductance of a set $S \subseteq V$ as
$\phi(S) = |\partial(S)|/vol(S)$, where $\partial(S) = \sum_{(u,v): u
  \in S, v \notin S} 1$ and $vol(S) = \sum_{u \in S} d(u)$. Then the
conductance $\Phi_G$ of the graph is $\min_{S : vol(S) \leq vol(V)/2}
\phi(S)$.  For the purposes of this paper, we say that a $d$-regular
graph is an $\epsilon$-expander if the conductance of the graph is
greater than or equal to $\epsilon$.

\section{Tight results for grids} 

We show that the cover time for the $d$-dimensional grid using a
2-cobra walk $[0,n]^d$ is $O(n)$, where the order notation hides
constant factors and other terms that depend on $d$; indeed, we show
all vertices are covered in $O(n)$ steps with high probability.
Previous work has shown that the cover time is $O(n \polylog(n))$
\cite{Dutta:2015:CRW:2821462.2817830}.\footnote{We note that the
  results of \cite{Dutta:2015:CRW:2821462.2817830} use a slightly
  different notation,
working with $n$ total nodes, or $[0,n^{1/d}-1]^d$.  We have opted
to work over $[0,n]^d$ for convenience.}  Our result
is clearly tight in its dependence on $n$.  Moreover, it shows that in
some circumstances one can avoid using tools such as Matthews' Theorem, 
which had been used previously in this setting \cite{Dutta:2015:CRW:2821462.2817830}, and
necessarily adds in an additional logarithmic factor in the number of
vertices over the hitting time.

The case of $d=2$ is simple and instructive;  we sketch a proof, but do
not go into full detail as we have a more detailed proof for the general case.
\begin{lemma}
\label{lem:22d}
The 2-cobra walk on $[0,n]^2$ has cover time $O(n)$.  
\end{lemma}
\begin{proof}
Let $v_0$ be the starting vertex of the walk.  We show that with high
probability all vertices are reached within $O(n)$ steps, and the
result follows, because if every vertex is hit within $T$ steps with
probability $p$, the expected time to cover all vertices is bounded
above by $T/p$.  

Let $v_1 = (x_1,y_1)$ be some other vertex on the grid.  Let $X_t$ be
the Manhattan distance between the closest pebble of the cobra walk
and $v_1$ after $t$ steps, which we will refer to as time $t$. (All
distances in this section will refer to Manhattan distances.)  Let
$u_t = (a_t,b_t)$ be some arbitrary vertex with a pebble of distance
$X_t$ from $v_1$ at time $t$.  We show by cases that there is drift so
that the expectation of $X_t$ decreases linearly over time, even when
at each step we pessimistically consider only the single vertex $u_t$
and not other additional pebbles; it follows from standard results in
random walks $v_1$ is reached after $O(n)$ steps with probability
$1-O(1/n^3)$.  Hence all vertices are covered after $O(n)$ steps with
probability $1-O(1/n)$, and the result follows.

If $x_1 \neq a_t$ and $y_1 \neq b_t$, the probability that at
least one of the two pebbles at $u_t$ moves closer to $v_1$ is at least
$1-(1/2)^2 = 3/4$, since each pebble moves closer to $v_1$ with
probability at least $1/2$.  (It can be more if a pebble is $v_1$ is on
the boundary of a grid.)  If either $x_1 = a_t$ or $y_1 = b_t$ but not
both, the probability that at least one of the two pebbles at $u_t$
moves closer to $v_1$ could be as small as $1-(3/4)^2 = 7/16 < 1/2$,
since each pebble could moves close to $v_1$ with probability only
$1/4$.  (This probability would be $1-(2/3)^2 = 5/9$ if for example
$x_1 = a_t = n$, but off the grid boundary this is not the case.)
Over one step, then, the expected distance may be increasing.  So we
instead consider two steps.  It is important to note that if the
distance increases on the first of the two steps but (at least) one
pebble moves so that $x_1 \neq a_t$ and $y_1 \neq b_t$ after the first
step, it improves our probability that the distance decreases in the
second step.  We find taking cases, assuming
that $u_t$ is at least distance 2 from the boundary and from $v_1$, that
$X_t$ increases by 2 by with probability $\frac{1}{16}\frac{9}{16}+\frac{1}{2}\frac{1}{4} = \frac{41}{256}$.
$X_t$ decreases by 2 by with probability $\frac{7}{16}\frac{7}{16} = \frac{49}{256}$.  
(Similar (better) results can be shown when $u_t$ is near a boundary.)

We therefore see that $X_t$ has negative drift (except at $X_t = 1$,
where the drift is slightly positive), and therefore the time to reach
0 can be shown to be $O(n)$ with probability $1-O(1/n)$ as claimed.
\end{proof}

The analysis above suggests technical difficulties to overcome with a direct
approach for general $d$; the behavior is slightly different at the boundaries of the
grid (though one could always work on the toroidal grid), and when 
coordinates match in one or more dimensions, it makes analyzing the drift
more difficult.  The analysis makes clear that for any fixed $d$, there should
be a large enough constant value of $k$ so that the cover time for the $k$-cobra walk 
is $O(n)$ on the $d$-dimensional grid $[0,n]^d$.   One just needs a large enough value
of $k$ so that the pebble nearest a target vertex drifts toward that vertex. 

We actually show the stronger result that the 2-cobra walk has cover
time $O(n)$ on the $d$-dimensional grid $[0,n]^d$, where the order
notation hides constants that depend on $d$.  We prove this below; we
have not aimed to optimize the constant factors.  The intuition for
the proof is the following.  If we look at the distance between the
closest point on the cobra walk and our target vertex in any single
dimension, it behaves like a biased random walk, with a bias toward 0.
Hence, after $O(n)$ steps, each individual dimension has matched
coordinates, with high probability.  Indeed, if each chain was an
independent biased random walk with constant bias (independent of
$d$), after $O(dn)$ steps we would expect each independent walk to be
near its stationary distribution, in which case each chain would be at
$0$ with some constant probability $\gamma$, and hence it would take
roughly $\gamma^d$ steps for all chains to be at 0 simultaneously.
Sadly, as usual, the fact that the chains are not themselves
independent causes significant technical challenges.

\begin{theorem}
\label{thm:dgrid}
The 2-cobra walk has cover time $O(n)$ on the $d$-dimensional grid
$[0,n]^d$ for any constant $d$. 
\end{theorem}
\begin{proof}
We break the proof into steps.  As in Lemma~\ref{lem:22d}, we consider
the distance to a target vertex over all dimensions for some pebble
generated by the 2-cobra walk.  We pessimistically keep track of only
a single pebble.  Specifically, our state at time $t$ can be defined as
follows.  Let $(z_{1,0},z_{2,0},\ldots,z_{d,0})$ be such that the
distance from the initial pebble to the target vertex is $z_{i,0}$ for
the $i$th grid dimension.  More generally, assume there is some pebble
at the $t$th step so that $(z_{1,t},z_{2,t},\ldots,z_{d,t})$ gives the
distance from that pebble to the target vertex in each dimension.  We
update the $z_{i,t}$ values over time steps as follows.  If our two
choices of pebbles generated from that pebble move in the same
dimension, we choose the pebble that moves closer to the target, if
such a pebble exists.  If our two choices of pebbles generated from
that pebble move in different dimensions $i$ and $j$, there are several
cases.  If $z_{i,t} = 0$ but $z_{j,t} \neq 0$, we choose the pebble
that moves in dimension $j$.  If $z_{i,t} = 0$ and $z_{j,t} = 0$, we
choose the pebble randomly.  If $z_{i,t} \neq 0$ and $z_{j,t} \neq 0$,
if both choices of pebbles move closer or both move farther away from
the target, we choose the pebble randomly; otherwise we choose the
pebble that moves closer.
\begin{lemma}
\label{lem:step1}
In each dimension, we have that if $z_{i,t} \neq 0$, then
$z_{i,t}$ changes in the next step with probability at least $1/(2d-1)$,
and conditioned on the $i$th dimension being the value that changes,
it decreases with probability at least $1/2 + 1/(8d-4)$.  
If $z_{i,t} = 0$, it increases in the next step with probability at most $2/(d+1)$.
\end{lemma}
\begin{proof}
This follows directly from the description above.  We note that worst
case with regard to the bias is when one dimension has $z_{i,t} \neq
0$ and $z_{j,t} \neq 0$ for $j \neq i$.  In this case, the only bias
that favors the $i$th dimension decreasing rather than increasing
stems for both choices being in the $i$th dimension.  With probability
$2(d-1)/d^2$ the pebble is chosen to move in the $i$th dimension and
some other dimension, in which case the pebble is equally likely to
move closer or further to the target.  With probability $1/d^2$
dimension $i$ is chosen for both moves, in which case it moves closer
with probability $3/4$.  Hence, conditioned on $z_{i,t}$ changing, it
increases with probability
$$\frac{(d-1)/d^2 + (3/4)/d^2}{2(d-1)/d^2 + 1/d^2} = \frac{d-1/4}{2d-1} = \frac{1}{2} + \frac{1}{8d-4}$$
as claimed.  

With regard to which dimension moves, we notice that when $z_{i,t}
\neq 0$, the probability of moving is least when we are on the
boundary in the $i$th dimension (and hence there is just one move in
that dimension), and other dimensions are not.  When $z_{i,t} = 0$,
that dimension is most likely to move if all others are on the boundary.
The above bounds reflect these cases.
\end{proof}
Our multi-dimensional biased random walk has a natural interpretation
as a discrete time queueing system, where customers arrive and wait at
a randomly chosen queue, where the arrival rate is slightly smaller
than the departure rate (except when a queue is empty).  In this
setting, our question concerns the time until the system empties from
a given starting state.  Surprisingly, despite this connection, we
could not find a statement corresponding to our desired result in 
the literature.  

The following follows easily from the bias shown above.
\begin{lemma}
\label{lem:step2}
In each individual dimension, if $z_{i,0}$ is bounded above by $n$, 
then with probability $1-O(1/n^{d+1})$, $z_{i,t}$ hits 0 in $O(d^2n)$
steps.  
\end{lemma}
\begin{proof}
Let $X$ be the number of steps taken in the $i$th dimension over the first
$64d^2n$ steps. Then by Lemma~\ref{lem:step1}, $\E[X] > 32dn$, and using a Chernoff bound
(e.g., \cite[Exercise 4.7]{MU})
\begin{eqnarray*}
\Pr(X \leq 16dn) \leq (e/2)^{-8dn}.
\end{eqnarray*}
The expected difference between the number of steps that decrease $z_{i,t}$
and the number of steps that increase $z_{i,t}$ grows with $X$, so we pessimistically
condition on $X \geq 16dn$.  Suppose that in this case that $0$ is never reached,
so the bias remains in effect over all $16dn$ steps.
Let $Y$ be number of decreases in the first $16dn$ steps.  
Then $\E[Y] \geq 8dn + 2n$, and again using a Chernoff bound (e.g., \cite[Exercise 4.7]{MU})
\begin{eqnarray*}
\Pr(Y \leq 8dn+n~|~X \geq 16dn) \leq \left (\frac{e^{-\delta}}{(1-\delta)^{(1-\delta)}} \right)^{-8dn+2n},
\end{eqnarray*}
where $\delta = 1/(8d+2)$.
Notice that if $Y \geq 8dn+n$ then in fact 0 was reached.  Hence the total probability that 0 is not reached is bounded above by 
$$\Pr(Y \leq 8dn+n~|~X \geq 16dn) + \Pr(X \leq 16dn) \leq (e/2)^{-8dn}$$
which is clearly $O(1/n^{d+1})$.  
\end{proof}

Similarly, the following result is standard for biased random walks.
\begin{lemma}
\label{lem:step3}
Once $z_{i,t}$ hits 0, with probability $1-O(1/n^{d+2})$, it will remain 
below $c_d \ln n$ for some constant $c_d$ (depending on $d$) over the
next $100d^2n^2$ steps.  
\end{lemma}
\begin{proof}
We may pessimistically assume that all $n^2$ steps are performed in the $i$th
dimension.  By the natural coupling we have that the probability that $z_{i,t}$ reaches
$c_d \ln n$ in exactly $k$ steps after it hits 0 is less than the probability in equilibrium
that a biased random walk with probability $\frac{1}{2} + \frac{1}{8d-4}$ is at $c_d \ln n$.  
(Technically, the biased random walk doesn't have an equilibrium distribution, because of
parity;  it will be an even number of steps from its starting point after an even number
of steps.  We can add an arbitrarily small self-loop probability and increase the number
of steps accordingly;  we use $n^2$ steps and assume the appropriate equilibrium distribution
for convenience.)  The equilibrium distribution for this biased random walk,
where  $\pi_j$ is the probability of being at $j$ in equilibrium, is 
easily found by detailed balance equations, which yield
$$\pi_j = \frac{2}{4d-1} \left( \frac{4d-3}{4d-1} \right)^k.$$
As $\pi_j$ is geometrically decreasing, for $j = c_d \ln n$ for some constant $c_d$
we have $\pi_k$ will be 
less than $1/(100d^2n^{d+4})$, and hence over the $100d^2n^2$ time steps we see $z_{i,t}$ never
reaches $c_d \ln n$ with probability $1-O(1/n^{d+2})$.
\end{proof}

We now use the following lemma, which in a slightly different
form appears in Theorem 7 of~\cite{}.  
We sketch the proof for completeness.
\begin{lemma}
\label{lem:finishgrid}
Starting from $(z_{1,t},z_{2,t},\ldots,z_{d,t})$, where each $z_{i,t}$ is 
at most $c_d \ln n$,   with probability $\Omega(1/(\ln \ln n)^c)$
for some constant $c$,
there is a $u = \alpha \ln n$ such that 
for some $k \leq u$, $z_{i,t+k}=0$ for all $i =1,\ldots,d$.
\end{lemma}
\begin{proof}
The analysis is broken into phases.  The first phase is of length
$O(\ln n)$, the second phase is of length $O((\ln n)^{1/2})$, and 
the $j$th phase is of length $O((\ln n)^{1/2^{j-1}})$.  All phases
have the same basic structure, except for the last.  Within each
phase, there are $d$ subphases.  In the $i$th subphase, we assume
that the $i$th coordinate $z_{i,t}$ moves according to a biased random walk
as previously described;  in all other phases, we pessimistically
assume that it moves according to an unbiased random walk.  Our goal
is to show that in each phase, each dimension moves closer to zero;
in particular, after the $j$th phase, all the $z_{i,t}$ are at
$O((\ln n)^{1/2^j})$ with constant probability.  This is shown using
Chernoff bounds (see Theorem 7 of~\cite{} for the corresponding calculation).  It follows that after 
$O(\ln \ln \ln)$ phases each $z_{i,t}$ will be bounded by a constant
with probability $\Omega(1)^{O(\ln \ln \ln n)} = \Omega(1/(\ln \ln n)^c)$.
At that point, there is a constant probability that the target vertex will
be reached in a constant number of steps.  
\end{proof}

Putting this all together now yields the theorem 
that the cover time for the 2-cobra random walk on $[0,n]^d$ is
$O(n)$ where the order notation hides constant factors that can depend on $d$.
Lemma~\ref{lem:step2} shows that for each dimension $i$, 
$z_{i,t}$ hits 0 within the first $64d^2n^2$ steps with high probability.  
Hence, by Lemma~\ref{lem:step3}, after $64d^2n^2$ steps all $z_{i,t}$ are
at most $c_d \ln n$ with high probability.  
We are therefore in a state where we reach the target vertex within an
additional $O(\ln n)$ steps with probability $\Omega(1/(\ln \ln n)^c)$.  If we have not reached the target vertex, however, we are still
within $c_d \ln n$ distance in each dimension with high probability.
That is, let ${\cal E}_1$ be the event that we did not reach the target vertex
within the $\alpha \ln n$ steps of Lemma~\ref{lem:finishgrid}, and let 
${\cal E}_2$ be the event that some $z_{i,t}$ is more than $c_d \ln n$
after those $\alpha \ln n$ steps.  We have by Lemma~\ref{lem:step3} that this 
is a low probability event (at least up through an additional $36d^2n^2$ steps),
so we can consider a polylogarithmic number of repeated trials of $\alpha \ln n$ steps.
As each trial succeeds with probability $\Omega(1/(\ln \ln n)^c)$,
we can conclude that we hit the target vertex with
probability at least $1-O(1/n^{d+1})$ within $O(n)$ steps.  It follows via a union bound that
with probability $1-O(1/n)$ all vertices are hit within $O(n)$ steps,
from which it readily follows that the cover time is $O(n)$.  
\end{proof}

We remark that this proof, while achieving $O(n)$ bounds, appears
quite loose in the constant factors.  As our proof does not directly take
advantage of the large number of pebbles within the system, we believe
the bounds could be tightened with respect to the dependence on $d$.
Even with this bound, we expect there remains further work
to fully understand the behavior of cobra walks on grids.  

We also remark that the multi-step case analysis used in
Lemma~\ref{lem:22d} can similarly be used to show that 2-cobra walks
on $k$-ary trees have cover times that are proportional to the graph's
diameter when $k = 2$ or $k = 3$.  We conjecture that in fact the
cover time for 2-cobra walks on $k$-ary trees is proportional to the
diameter for every constant $k$, where the constant of proportionality may
depend on $k$, similar to Theorem~\ref{thm:dgrid} for grids.

\section{Cover time for a graph with arbitrary conductance} 
In this section, we significantly improve and extend the results first developed in~\cite{Dutta:2015:CRW:2821462.2817830}, which provided 
an $O(\log^2 n)$ bound for the cover time of a $k$-cobra walk on a $d$-regular expander with 
an expansion only achieved by graphs such as random $d-$regular expanders and Ramanujan expanders. Here, we 
provide the first known bound for the cover time of cobra walks on a $d$-regular graph of \textit{arbitrary} conductance $\Phi$. While the upper bound 
is not useful for graphs with very low conductance, there are a wide class of graphs beyond expanders for which this guarantees 
rapid coverage,, e.g. the hypercube, power-law graphs, and random geometric graphs. For this section and the rest of the paper, 
we work exclusively with $2$-cobra walks, and use cobra walks to mean $2$-cobra walks where the meaning is clear.

\begin{theorem}
\label{ConductanceTheorem} 
Let $G$ be a bounded-degree, $d$-regular graph with conductance
$\Phi_G$. Then a cobra-walk starting at any vertex will cover $G$ in
$O(d^4 \Phi_G^{-2} \log^2(n))$ rounds, with high probability.
\end{theorem}

From this general bound, we have the following corollary, which corresponds to the previous result of 
~\cite{Dutta:2015:CRW:2821462.2817830}.

\begin{corollary}
\label{ExpanderCor} 
Let $G$ be a bounded-degree $d$-regular $\epsilon$-expander graph. Then a cobra walk starting at any vertex 
$v$ will cover $G$ in $O(\log^2(n))$ rounds with high probability when $d \in O(1)$. 
\end{corollary}

Due to the extreme difficulty of analyzing the progress of a cobra
walk explicitly, we follow~\cite{Dutta:2015:CRW:2821462.2817830} and
analyze a process that, while conceptually similar to a cobra walk,
has more structured rules which allow us to analyze walks taken by
individual pebbles with only limited dependence on one
another. Furthermore, this process stochastically dominates the cobra
walk when starting from the same vertex with respect to the time to
cover all of the vertices. Any upper bound on the cover time for this
process therefore automatically applies to the cover time of a cobra
walk as well.

This process, which we refer to as $W_{alt}$, can be defined as
follows: We start with $\delta n$ pebbles for some constant $\delta
\leq 1/2$, distributed arbitrarily among the vertex set $V$.
Furthermore, we assume that the pebbles have a total ordering, and
that each pebble knows its position in the ordering.  In this process,
unlike in the cobra walk, no pebbles split or coalesce -- the total
number of pebbles is an invariant.  Pebbles interact with one another
according to two simple rules. For each time step:
\begin{enumerate}
\item If one or two pebbles are co-located in time and space: at vertex $v$ at time $t$, 
each pebble chooses a random neighbor from $N(v)$ independently u.a.r and moves to that vertex.
\item If three or more pebbles are at $v$ at time $t$, the two pebbles with the lowest order each pick a vertex
independently from $N(v)$ u.a.r. and move to their chosen vertex. Label these vertices $u,w$ (keeping in mind 
that $u=w$ is allowed). The remaining pebble(s) at $v$ then each independently pick $u$ with probability $1/2$
or $w$ with probability $1/2$ and move to the vertex they have chosen.
\end{enumerate} 

\onlyLong{
The process $W_{alt}$ can be viewed,  at a single step, as a coalescing random walk in which the threshold for 
coalescence is three pebbles at the same vertex, rather than the standard two. As an added condition, the 
third and higher pebbles at a vertex (w.r.t. to the total ordering of pebbles) chooses which of the first two pebbles
to coalesce with via an unbiased coin flip.
}

If we are observing, for a single time step, a vertex $v$ at which two
or more pebbles (or zero, trivially) have landed, we would be unable
to distinguish between a cobra walk and a $W_{alt}$ process. On the
other hand, if we observe a vertex $v$ at which a single pebble has
landed, we \textit{would} be able to distinguish. In $W_{alt}$, in the
next step, the pebble at $v$ will act like a simple random walk and
move to a single neighbor. On the other hand, in the cobra walk, there
is some probability $p$ two neighbors will receive a pebble from $v$,
and probability $1-p$ only one neighbor will. Thus, the active set of
$W_{alt}$ can be viewed as a (possibly proper) subset of the active
set of a cobra walk when both are started from the same initial
state. Therefore, at any future time $t$, the size of the active set
of a cobra walk (viewed as a random variable) stochastically dominates
the size of the active set of $W_{alt}$. We can then "invert" this
argument to show that the cover time of $W_{alt}$ stochastically
dominates the cover time of the cobra walk. \onlyShort{ For a formal
  proof, we refer to the appendix, which contains the full version of
  the paper.}

Finally, for technical reasons, we make the $W_{alt}$ process a \textit{lazy} process. That is, at each step, with probability $1/2$ all 
pebbles remain in their same position. With probability $1/2$ a step proceeds with the probabilities described above. (Thus, to 
obtain the unconditioned probabilities of any particular action, we need to multiply the above probabilities through by $1/2$.)

\onlyLong{
\begin{lemma} 
\label{stochasticdominance}
Let $G$ be a $d$-regular graph. Let $S \subset V$ be a subset of the vertices of $G$ such that $|S| < n/2$. 
Consider $C$, a cobra walk which begins at all the vertices of $S$, and $W$, a $W_{alt}$ process which begins 
at all the vertices of $S$ and in which we are allowed to place an arbitrary number of pebbles at each $v \in S$, both 
at time $t=0$. Let $\tau_{C(S)}$ be the first time all the vertices are covered by $C$ 
and let $\tau_{W(S)}$ be the first time all the vertices are covered by $W$.
Then there exists a coupling under which 
$\tau_{C(S)} \leq \tau_{W(S)}$. 
\end{lemma}

\begin{proof}
Without loss of generality, let us assume that the initial
configuration of $W$ is such that no $v \in S$ has only one
pebble. Define a sequence $K_0,K_1,\ldots$ associated with $C$ and
$K'_1,K'_2,\ldots$ associated with $W$ where $K^{'}_i$ is the set of
vertices that have been covered by $C$ ($W$, respectively) at time
$i$. $G$ is covered when $K_T = n$. With each series we associate
another two series $\Delta_{(0,1)},\Delta_{(1,2)},\ldots$ and
$\Delta'_{(0,1)},\Delta'_{(1,2)},\ldots$, where $\Delta_{(i,j)}$
represents $|K_j| - |K_i|$. Note that unlike the sequence of active
sets of each process, the $K$ and $\Delta$ series are monotonically
non-decreasing.

The first time $\tau_C$ that all vertices for $C$ are covered is the time at which$\sum_{i=0}^{\tau_C} \Delta_{(i,i+1)} = n$,
and similarly for $\tau_W$. We 
now show that $C$ dominates $W$ statewise for each $\Delta_{(i,i+1)}$ , $\Delta'_{(i,i+1)}$. Note that, as a 
random variable, the distribution of $\Delta_{(0,1)}$ and $\Delta'_{(0,1)}$ are exactly the same, since we 
stipulated that every $v \in S$ for $W$ has more than one pebble. However, considering step $(1,2)$, we have 
that $\prob{\Delta'_{(1,2)} \geq c} \leq \prob{\Delta_{(1,2)} \geq c}$ by the simple fact that in the $(0,1)$ step 
there 
was a chance in $W$ that the group of pebbles of at least one of the vertices of $S$ would do the following: let 
$w$ and $x$ be the two nodes picked by the pebbles of $v$ to walk to during $(0,1)$. If $x=w$ then $C$ and $W$ are equivalent. 
If $x \neq w$, then with some finite probability (proportional to the number of pebbles at $v$), all 
but one of the pebbles would go to $w$, and only one pebble would go to $x$ (or vice versa). W.l.o.g. assume $x
$ receives only one pebble. Thus in the round $(1,2)$, whereas any $C$ in the exact same state has a non-zero 
probability of creating two pebbles from $x$, $W$ has zero chance of doing this and can no longer mimic $C$. 

Since $\prob{\Delta'_{(1,2)} \geq c} \leq \prob{\Delta_{(1,2)} \geq c}$, it follows that $\prob{\Delta'_{(i,j)} \geq c} 
\leq \prob{\Delta_{(i,j)} \geq c}$ for all $(i,j)$ by induction, since for each additional step $K'_i$ probabilistically 
occupies a smaller set of vertices than $K_i$ and we can again apply the reasoning above. (Note that just 
observing this occurs on the step $(1,2)$ is sufficient to prove the claim). Therefore, using the statewise 
stochastic dominance argument, it follows that $\prob{\tau_W(S) \geq K}  \geq \prob{\tau_C(S) \geq K}$.
\end{proof}
}

We are now ready to prove Theorem~\ref{ConductanceTheorem}. 
\onlyShort{This proof builds on the machinery developed
  in~\cite[Theorem 16 and Lemma 17]{Dutta:2015:CRW:2821462.2817830}.
  Here we will outline the proof technique, making note of the new
  ideas needed to provide a general bound on the cover time in terms
  of the conductance of the graph.  We have refactored the full proof
  for completeness, and it is available in the appendix.}
\onlyLong{ This proof uses the machinery developed in the proofs
  of~\cite[Theorem 16 and Lemma 17]{Dutta:2015:CRW:2821462.2817830}.
  We first make note of the new ideas needed to provide a general
  bound on the cover time in terms of the conductance of the graph.
  For completeness, we have refactored the proof with the changes
  necessary to prove our conductance claim and present it in its
  entirety here.}

One significant difference with previous work is the following.  The
previous analysis consisted of two stages, with the first stage
providing an exponential growth in the number of pebbles, but
requiring the graph to have extremely high expansion.  In contrast,
we begin our analysis of $W_{alt}$ with \textit{a large number} of
pebbles, all of which are located at a single initial, arbitrary
vertex $v$, and compare this to a cobra walk that starts at
$v$. \onlyLong{In~\cite{Dutta:2015:CRW:2821462.2817830}, the analysis
  was broken up into two stages. In the first stage, a cobra walk
  process was analyzed directly and it was shown that after $O(\log
  n)$ rounds, the size of the cobra walk went from $1$ vertex in the
  active set to $\delta n$ vertices in the active set, with high
  probability. However, one restrictive requirement was that the
  (vertex) expansion $\epsilon$ of the graph $G$ needed to be
  extremely high, satisfying the inequality:
\begin{equation*}
\dfrac{1}{\epsilon^2(1-\delta) + \delta} > \dfrac{d(de^{-k} + (k+1)) - \frac{k^2}{2}}{d(e^{-k} + (k-1)) - \frac{k^2}{2}}
\end{equation*}
where $k$ is the branching factor of the cobra walk ($2$, for our purposes) and $d$ is the degree of $G$. Only some families of 
expanders, such as Ramanujan graphs and random regular graphs with high degree, would
satisfy this condition. Once the cobra walk reaches $\delta n$ active vertices, we replace the cobra walk with a $W_{alt}$ in which 
we position one $W_{alt}$ pebble at each vertex that was active in the cobra walk at the time at which we perform the swap. The 
analysis then proceeds from this point to show that every vertex will have been visited by at least one pebble of $W_{alt}$ with 
high probability in $O(\log^2 n)$ time. 

}Since the stochastic dominance of the cover time of $W_{alt}$ over
a cobra walk holds for all starting distributions of the pebbles on
the vertices of $G$, it also therefore holds for the starting
distribution in which all $\delta n$ pebbles begin at the same
vertex. This allows us to work exclusively with $W_{alt}$ for the
entire analysis, bypassing the analysis of the cobra walk as it grows
from one active node to a linear number of active nodes and therefore
dropping the requirement that $G$ have extremely high expansion.

The second contribution of our new analysis is the derivation of
bounds on the probability of coverage of vertices in terms of the
conductance of the graph.  As a consequence, we are able to derive
bounds that hold for regular graphs with arbitrary values of $d$ and
arbitrary $\Phi_G$.

\begin{LabeledProof}{Theorem~\ref{ConductanceTheorem}}
To prove the theorem, we break up $W_{alt}$ into epochs of length $s$, where $s = f(\Phi_G,n)$. As we will see, 
the function $f$ has the form $f(\Phi_G,n) = O(\Phi_G^m \log n )$, for $m$ a constant. We show that each vertex 
$v$ has a constant probability of being hit by at least one pebble at the exact time step the epoch ends. We then 
go through $O(\log n)$ epochs to boost the probability $v$ is covered to a sufficiently high probability and then 
take a union bound over every vertex to obtain the result. 

We first prove that in a particular epoch vertex $v$ will be hit by at least one pebble with constant probability. Define $E_i$ to be 
the event that pebble $i$  for $i \in {1,\ldots,\delta n}$ covers an arbitrary vertex $v$ at time $s$. Then the event that $v$ is hit by at 
least one pebble is $\bigcup E_i$. Note that for any two pebbles $i,j$ it is not safe to assume that $E_i$ and $E_j$ are 
independent, 
since $i$ and $j$ may cross paths during their walks and have their transition probabilities affected by the rules of $W_{alt}$. 
However, we can use a second-order inclusion-exclusion inequality to lower-bound the probability:
\begin{equation*}
\prob{\bigcup_i E_i} \geq \sum_i \prob{E_i} - \sum_{i,j:i\neq j} \prob{ E_i \cup E_j}
\end{equation*}
Thus, we need to show that, for the r.h.s., the first quantity is a constant and the second quantity is smaller. 

\onlyShort{The first part is fairly straightforward and makes use of well-known results regarding simple random walks on graphs of a given conductance. We point the reader to the full version of the proof.} 

\onlyLong{

The first part is fairly straightforward. $\prob{E_i}$ can be viewed as the (marginal) probability that a simple random walk of pebble 
$i$ hits $v$ as time $s$. Indeed, if we only observe the movement of $i$ and ignore all other pebbles, its transition probability to 
any adjacent vertex from its current vertex is always $1/d$, regardless of the number of pebbles at the current vertex. Since this 
reduces to analyzing a simple random walk, we can use the well-known result that after $O(\log n / f(\Phi_G))$ time the probability 
that $i$ will be at $v$ will be within a $\pm 1/2n$ interval around $1/n$, assuming that $G$ is regular and the stationary distribution 
is the normalized uniform vector. For example, we can use the following result from~\cite{Spielman} to bound the maximum 
difference between components of the probability distribution of the walk after $t$ steps and the stationary distribution:
\begin{equation*}
|p_t(v) - \pi(v) | \leq \sqrt{\dfrac{d(v)}{d(u)}} e^{-t \nu_2} \leq e^{-t \Phi_G^2/2},
\end{equation*} 
where the quantity in the square root is $1$ because of the regularity of the graph, and we have substituted $\Phi^2_G/2$ for $
\nu_2$, the second largest eigenvalue of the normalized Laplacian of $G$. Thus, for $t > \dfrac{2\log(2n)}{\Phi_G^2}$, $ \prob{E_i} \leq 1/2n$.

}

Next we establish an upper bound for $\prob{E_i \cap E_j}$ using the joint (dependent) walks of pebbles $i$ and $j$.  Without loss
of generality, assume that $i$ has a lower order than $j$, and that if $i$ and $j$ are co-located in time and space, then not only is $i$ the lower 
order, but $j$ must behave like a third-or-greater priority pebble and choose the same vertex that $i$ next hops to with probability 
$1/2$. (Note that under this assumption, the total probability that $j$ moves to the same vertex as $i$ is $1/2 + 1/2d$, a fact which 
will be important shortly). 

We can view the random walks of $i$ and $j$ as a random walk over a graph with the topology of the tensor product graph $
\Gtensor$. The tensor product graph has the Cartesian product $V(G) \times V(G)$ as its vertex set, and an edge set defined as 
follows: vertex $(u,u') \in V(\Gtensor)$ has an edge to $(v,v') \in V(\Gtensor)$ if and only if $(u,v), (u',v') \in E(G)$. The random walk 
we construct on $\Gtensor$ is slightly different than a simple random walk on the same graph: we make the edges directed, and 
we attach weights to them such that the walk on the directed graph $D(\Gtensor)$ is isomorphic to the movement of pebbles $i$,$j
$ in a $W_{alt}$ on $G$. 

We will show in the next Lemma that the walk on $D(\Gtensor)$ has a stationary distribution, that this stationary distribution is 
close to $1/n^2$, and that the walk converges rapidly to it. Thus, after $s$ steps, the probability that $i$ and $j$ are at the same 
vertex in $G$ is bounded from above by $2/(n^2 + x) + 1/n^4$. 

Once we have established bounds for $\prob{E_i}, \prob{E_i \cap E_j}$, we can apply them to the full expression and get:
\begin{eqnarray*}
\prob{\bigcup E_i }
 &\geq& \sum_i \prob{E_i} - \frac{1}{2} \sum_{i \neq j} \prob{E_j \cap E_i } \\
&\geq& \delta n \frac{1}{2n} - \frac{1}{2}  \binom{\delta n}{2}\left( \frac{2}{n^2 + n}  + \frac{1}{n^4}\right) \\
&\geq& \frac{\delta}{2} - \frac{\delta}{4} (\delta n^2 - n)  \left( \frac{2}{n^2 + n}  + \frac{1}{n^4}\right) \\
&\geq& \frac{\delta}{2} - \frac{2 \delta^2}{4}
\end{eqnarray*}

Finally, to complete the proof, we have the following lemma:
\begin{lemma}
Let $G$ and $s$ be defined as in Theorem~\ref{ConductanceTheorem}. Let $i,j$ be two pebbles walking on $G$ according to the 
rules of $W_{alt}$ If $E_i \cap E_j$ are defined as the event in which $i$ and $j$ are both located at (arbitrarily chose) vertex $v$ 
at time $s$, then $\prob{E_i \cap E_j} \leq \dfrac{2}{n^2 + 2} + \dfrac{1}{n^4}$. 
\end{lemma}

\onlyShort{ Here we provide a brief sketch of the proof, full details of which are provided in the long version. We first create a directed graph $\D(\Gtensor)= \D$ through some minor modifications of the edge set of $\Gtensor$. These changes are such that the transition probability matrix of $\D$ is isomorphic to $W_{alt}$ on $G$.  We make use of the fact that $\D$ is an Eulerian digraph to calculate the stationary distribution $\pi$ of the walk on $\D$. We then show that $\Phi_G$ = $\Phi_{\Gtensor}$ and, making use of the work of~\cite{Chung05laplaciansand}, demonstrate through the use of the directed Cheeger inequalities that the second smallest eigenvalue of the normalized Laplacian of $\D$ is bounded from below by $O(\Phi_G^{-2})$. We then apply Theorem 7.3 of~\cite{Chung05laplaciansand} to show that this walk converges in $O(\Phi^{-2} \log n)$ time to the stationary distribution of $\Theta(1/n^2)$ for each vertex in $\D$. }

\onlyLong{

\begin{proof}
We first make a few observations about the topology of graph $\Gtensor$. There are two classes of vertices: the first class involves 
all vertices that have the form $(u,u)$ (and recall that $u \in  V(G)$). A pebble at $(u,u)$ would correspond to two pebbles 
occupying $u$ in the paired walk on $i,j$ on $G$. Label the set of such vertices $S_1$. The cardinality of $S_i$ is $n$. The 
remaining vertices, which belong to what we label $S_2$, are of the form $(u,v)$, for $u\neq v$. There are $n^2 - n$ such vertices. 
Also note that in the \textit{undirected} graph $\Gtensor$, each vertex has degree $d^2$. Further, every vertex in $S_1$ will have 
$d$ neighbors also in $S_1$, by virtue of the regularity of $G$. 

Many of the spectral properties of $G$ apply to $\Gtensor$. Because $G$ is an $\epsilon$-expander, the transition probability 
matrix of a simple random walk on $G$ has a second eigenvalue $\alpha_2(G)$ bounded away from zero by a constant. It is well 
known (c.f.~\cite{LevinPeresWilmer2006}) that the transition probability matrix of a simple random walk on $\Gtensor$ will have $
\alpha_2(\Gtensor) = \alpha_2(G)$, which we will refer to as $\alpha_2$ henceforth.

Next, we transform the (undirected) graph $\Gtensor$ into a directed graph $D(\Gtensor)$ according to the following rules: For 
every undirected edge $(x,y)$ in $\Gtensor$, we replace it with $2$ directed edges: $x \rightarrow y $ and $y \rightarrow x$. As 
mentioned, every vertex in $S_1$ will have $d$ neighbors also in $S_1$, meaning that there will be one directed arc $x \rightarrow 
y$ for every vertex $x \in S_1$ and $y \in N(x)$ for $y$ in either $S_1,S_2$. We next add an additional $d$ copies of edge $x 
\rightarrow y$ for every such original edge.  It is relevant for the analysis to note that the regularity of the subgraph of $\Gtensor$ 
induced on $S_1$ implies that every vertex in $\D(\Gtensor)$ will have an equal number of out and in arcs. Therefore $
\D(\Gtensor)$ is an Eulerian digraph, a fact which will be important when calculating the stationary distribution of a walk on this 
graph.

Next, we calculate the transition probabilities of a random walk on $\D(\Gtensor)$. An outgoing edge from $x$ is picked with 
probability $1/d$. However, when we compute the probability of transitioning to a neighboring vertex, there is a difference 
between nodes in $S_1$ and $S_2$. In $S_2$, each transition occurs with probability $1/d^2$, as in the case of a symmetric walk. 
Note that this includes transitions to nodes in $S_1$ as well as within $S_2$. On the other hand, the probabilities of transitions 
from nodes in $S_1$ are modified. The probability of transitioning from a vertex $x \in S_1$ to one of its $d^2 -d$ neighbors in 
$S_2$ is $1/2d^2$. The probability of transition to another neighbor in $S_1$ is $(d+1)/2d^2$ (on account of the multiple edges). 
Thus, the walk on digraph $\D(\Gtensor)$ corresponds is isomorphic to the joint walk of pebbles $i,j$ on $G$ according to the rules 
of $W_{alt}$.

Because the walk on $D(\Gtensor)$ is irreducible, it has a stationary distribution $\pi$, which is the (normalized) eigenvector of the 
dominant eigenvalue of the transition matrix $M$ of the walk on $D(\Gtensor)$. Furthermore, because $D(\Gtensor)$ is Eulerian, 
the stationary distribution of vertex $x$ is exactly given by: $d^+(x)/|E|$,  where $d^+(x)$
is the out-degree of $x$. Therefore there are 
only two distinct values of the components of the stationary vector: for all $x \in S_1$, $\pi(x) = 2/(n^2 + n)$, while for all $y \in 
S_2$, $\pi(y)= 1/(n^2 + n)$.

A note of caution: because $G$ and $\Gtensor$ have such nice spectral properties, and because $\D(\Gtensor)$ represents such 
a minor modification of $\Gtensor$ it would be natural to infer that $\D(\Gtensor)$ also has many of the same nice properties 
(particularly, the property of rapid convergence). However, it is often the case that properties of Markov chains on undirected 
graphs do not also hold for Markov chains on directed graphs. Thus, we must carefully check our rapid convergence hypothesis. 
Fortunately, following closely the work of~\cite{Chung05laplaciansand}, we can indeed verify that the walk on $\D(\Gtensor)$ 
converges rapidly to its stationary distribution.

For succinctness of notation denote $\D = D(\Gtensor)$. Consider the function $F_{\pi}: E(\D) \rightarrow \Re$ given by $F_{\pi}
(x,y) = \pi(x) P(x,y)$, where $\pi(x)$ is the $x$-th component of the stationary distribution of the walk on $\D$ and $P(x,y)$ is the 
associated transition probability moving from $x$ to $y$. Then $F_{\pi}$ is the \textbf{circulation} associated with the stationary 
vector as shown in Lemma 3.1 of ~\cite{Chung05laplaciansand}. Note that a circulation is any such function that satisfies a 
balance equation: $\sum_{u, u\rightarrow v} F(u,v) = \sum_{w,v\rightarrow w} F(v,w)$. 

There is a Cheeger constant for every directed graph, defined as:
\begin{equation}
h(G) = \inf_S \frac{F_{\pi}(\partial S)}{\min\{F_{\pi}(S),F_{\pi}(\bar{S})\}}
\end{equation}
where $F_{\pi}(\partial S) = \sum_{u\in S,v\notin S} F(u,v)$, $F(v) = \sum_{u,u\rightarrow v} F(u,v)$, and $F(S)=
\sum_{v \in S} F(v)$ for a set S. Furthermore, Theorem 5.1 of~\cite{Chung05laplaciansand} shows that the second eigenvalue, $\lambda$, of the 
Laplacian of $\D$ satisfies:
\begin{equation}
2 h(\D) \geq \lambda \geq \frac{h^2(\D)}{2}
\end{equation}

The Laplacian of a directed graph is defined somewhat differently than the Laplacian of an undirected graph. However, because 
we will not use the Laplacian directly in our analysis, we refer the reader to~\cite{Chung05laplaciansand} for the definition. We will 
directly bound the Cheeger constant for $\D$, and hence produce a bound on the second eigenvalue of the Laplacian. This second 
bound will then be used to provide a bound on the convergence of the chain to its stationary distribution. 

First, w.l.o.g., assume that $F_{\pi}(S)$ is smaller than its complement.  Furthermore assume that $S$ is the set that satisfies the $
\inf$ condition in the Cheeger constant. We have $F_{\pi}(\partial S)  = \sum_{x \rightarrow y, x \in S, y \in \bar{S}} \pi(x) P(x,y)$, 
and \\  $F_{\pi}=\sum_{x \rightarrow y, x \in S} \pi(x) P(x,y)$. The first sum occurs over all (directed) edges that cross the cut of $S$, 
while the second sum occurs over all edges leaving vertices in $S$: the numerator and denominator of the conductance $\Phi_{G
\times G}$. Thus we can provide a lower bound for the entire Cheeger constant: 
\begin{eqnarray*}
h(\D) &=& \inf_S \frac{F_{\pi}(\partial S)}{\min\{F_{\pi}(S),F_{\pi}(\bar{S})\}} \\
&\geq& \Phi_{G \times G} \dfrac{P_{min} \pi_{min}}{P_{max} \pi_{max}} \\
&=& \Phi_{G \times G} \cdot \dfrac{\dfrac{1}{4d^2} \dfrac{1}{(n^2 + n)}}{\dfrac{1}{2} \dfrac{2}{(n^2 + n)}} =  \dfrac{\Phi_{G \times G}}
{4d^2}  
\end{eqnarray*}  
where $P_{max}$ is $1/2$ because of the lazy property of $W_{alt}$.

Next, we show that $G$ and $\Gtensor$ have the same conductance. As we noted earlier, $G$ and $\Gtensor$
have the same second-smallest Laplacian eigenvalue $\nu_1$. From~\cite{TCS-010}, we know that they 
have the same spectral gap $\gamma$, 
defined as $1 - \lambda(G)$ where $\lambda$ is the second largest eigenvalue of the adjacency matrix of $G$. 
From Theorem 4.14, we also know that the spectral expansion (gap) of $G$ is equal to $\alpha \epsilon^2/2$, where 
$\alpha$ is the fraction of edges that are self loops and $\epsilon$ is the edge expansion of $G$, which for $d$-regular 
graphs is the same as the conductance $\Phi_G$. Thus, $\Phi_G = \Phi_{\Gtensor}$.

We then apply the lower (directed) Cheeger inequality to have $\lambda \geq \dfrac{\Phi_G^2}{32d^4}$. 

We now show rapid convergence (in logarithmic time in $n$) of the walk on $\D$ to the stationary distribution. To measure distance 
from the stationary distribution, we use the $\Xi$-square distance:
\begin{equation}
\Delta'(t) = \max_{y \in V(\D)} \left(  \sum_{x \in V(\D)} \dfrac{(P^t(y,x) - \pi(x))^2}{\pi(x)}\right)^2
\end{equation} 
It is straightforward to show that any distance in the $\Delta'$ metric is no smaller than a distance using a total-variational distance 
metric. Hence the distribution of a random walk starting anywhere in $\D$ will be close to its stationary distribution w.r.t. each 
component vertex.

We next apply Theorem 7.3 from~\cite{Chung05laplaciansand}.
\begin{theorem} 
Suppose a strongly connected directed graph $G$ on $n$ vertices has Laplacian eigenvalues $0=\lambda_0 \leq \lambda_1 \leq 
\ldots \leq \lambda_{n-1}$. Then G has a lazy random walk with the rate of convergence of order $2 \lambda_1^{-1} (-\log \min_x 
\pi(x))$. Namely, after at most  $t \geq 2 \lambda_1^{-1}((-\log \min_x \pi(x) + 2c)$  steps, we have: 
\begin{equation*}
\Delta'(t) \leq e^{-c}
\end{equation*}
\end{theorem}

Recall that we provided a lower bound for $\lambda$, the second-smallest eigenvalue of the Laplacian of $\D$ in terms of the 
conductance of $\Phi_G$. We can thus apply the above theorem to show that after at most:
\begin{equation*}
s = \dfrac{32d^4}{\Phi_G^2} \left(\log(n^2 + n) + 4 \log(n^2)\right)
\end{equation*}
steps we will have $\Delta'(t) \leq \dfrac{1}{n^4}$. For the random walk on $\D$, after a logarithmic number of steps $s$ (in $n$), a 
walk that starts from any initial distribution on $V(\D)$ will be within $n^{-4}$ distance of the stationary distribution of any vertex. 
\end{proof}

Mapping our analysis back directly to the coupled walk of $i,j$ on $G$, it then holds that $\prob{E_i \cap E_j} \leq 2/(n^2 + n) + 1/
n^4$ when pebbles $i$ and $j$ start from any initial position. 

}

\onlyShort{This concludes the proof of Theorem~\ref{ConductanceTheorem}. It follows immediately that when $d$ is a constant 
and $G$ is any expander with conductance $\Phi_G$ bounded away from zero by a constant, a cobra walk will cover $G$ in at 
most $O(\log^2 n)$ time.}

\end{LabeledProof}

\input{macros}
}
\newcommand{\degree}[1]{d(#1)}
\newcommand{\ws}{\widehat{\sigma}}

\section{Cover time for general graphs}
In this section, we show that the cover time of $2$-cobra walks for
general graphs is $O(n^{11/4} \log n)$.  For $\delta$-regular graphs,
we obtain an improved bound of $O(n^{2-1/\delta} \log n)$.  We
establish the bounds on the cover time by proving that for any start
vertex $u$ and any target vertex $v$, the hitting time from $u$ to
$v$, denoted by $H(u,v)$, is $O(n^{11/4})$ for general graphs and
$O(n^{2-1/\delta})$ for $\delta$-regular graphs.  We then apply
Matthews' Theorem (Theorem~\ref{matthewscobra}), which actually give a
high probability result.

To analyze the cobra walk, we consider biased versions of the standard
random walk; in each step, with some probability, instead of moving to
a neighbor chosen uniformly at random, the walk moves to a vertex as
determined by a memoryless controller whose aim is to reach a certain
target vertex.  The biased walk we analyze is a variant of {\em
$\eps$-biased walks}\/ defined in~\cite{azar}.  In
Section~\ref{sec:general.biased}, we review the $\eps$-biased walks
of~\cite{azar} and introduce and analyze our variant, which we call
{\em inverse-degree-biased} walks.  In
Section~\ref{sec:general.regular}, we use $1/\delta$-biased walks to
derive an $O(n^{2 - 1/\delta})$ bound on the hitting time of the cobra
walk on $\delta$-regular graphs, improving on the quadratic bound for
standard random walks.  In Section~\ref{sec:general.nonregular}, we
use inverse-degree-biased walks to derive an $O(n^{11/4})$ bound on
the hitting time of the cobra walk on nonregular graphs, improving on
the cubic bound for standard random walks.

\subsection{Biased random walks}
\label{sec:general.biased}
An {\em $\eps$-biased walk}\/ on $G$ is defined as follows: in each step,
with probability $1 - \eps$ one of the neighbors is selected uniformly
at random, and the walk moves there; with probability $\eps$, a
controller gets to select which neighbor to move to.  The controller
can be probabilistic, but it is time independent.  The primary
motivation for biased random walks is to study how much can a
controller increase the occupancy probabilities at a targeted subset
$S$ of the vertices in the stationary distribution.  A central result
of~\cite{azar} is the following lower bound achieved by an optimal
controller in an $\eps$-biased walk.
\begin{theorem}[Theorem~2 of~\cite{azar}]
\label{thm:azar}
Let $G = (V,E)$ be a connected graph, $S\subset V$, $v \in S$ and
$x \in V$. Let $\Delta(x,v)$ be the length of the shortest path
between vertices $x$ and $v$ in $G$ and $\Delta(x,S) = \min_{v\in
S} \Delta(x,v)$. Let $\beta = 1 -\epsilon$. There is an $\eps$-bias
strategy for which the stationary probability at $S$ (i.e. the sum of
the stationary probabilities of $v\in S$) is at least
\begin{equation}
\dfrac{\sum_{v\in S} \degree{v}}{\sum_{v \in S} \degree{v} + \sum_{x\notin S} \beta^{\Delta(x,S) -1} \degree{x}}.
\end{equation} 
\end{theorem}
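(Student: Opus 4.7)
I would prove the theorem by exhibiting an explicit $\eps$-bias strategy and lower-bounding the stationary mass on $S$ of the resulting chain. The natural strategy is a ``push toward $S$'': for every $x \notin S$, let $c(x)$ be a neighbor of $x$ with $\Delta(c(x),S) = \Delta(x,S) - 1$, i.e.\ a neighbor on some shortest path from $x$ to $S$; for $x \in S$ the choice of $c(x)$ is essentially immaterial, and one may, say, keep the walk in $S$ whenever a neighbor of $x$ lies in $S$. Let $\pi$ denote the stationary distribution of the resulting chain.

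The main step is a cut analysis of $\pi$ using the level-set decomposition $B_i = \{x : \Delta(x,S) = i\}$ with $B_0 = S$ and $A_i = B_0 \cup \cdots \cup B_{i-1}$. Introducing the normalized potential $\phi(x) := \pi(x)/d(x)$, I would write the stationary flow balance across the cut $(A_i, V \setminus A_i)$. For $i \ge 2$ the only cut edges lie in $E(B_{i-1}, B_i)$, and the balance reduces to
\[
\beta \sum_{\substack{uv \in E \\ u \in B_{i-1},\, v \in B_i}} \bigl(\phi(u) - \phi(v)\bigr) \;=\; \eps \, \pi(B_i),
\]
because the random-walk flow $\beta\phi(u)$ across an edge from $u$ to $v$ is symmetric in $\phi$, while the controller, by design, contributes $\eps\,\pi(B_i)$ inward (pushing $B_i$ into $B_{i-1}$) and nothing outward.

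Writing $\pi(B_i) = \sum_{v \in B_i} d(v)\phi(v)$ and iterating the identity together with the per-vertex stationary equations, I would deduce a geometric decay of $\phi$ by a factor of $\beta$ per level outward, yielding
\[
\pi(x) \;\le\; \beta^{\Delta(x,S)-1}\, d(x)\, M \qquad (x \notin S),
\]
together with a matching lower bound $\pi(v) \ge d(v)\, M'$ for $v \in S$, where $M$ and $M'$ are comparable constants fixed by the stationary equations on $S$. Summing the upper bounds and using $\sum_y \pi(y) = 1$ forces $M$ times the denominator in the claimed ratio to be at least $1$; combining with $\pi(S) \ge M'\, d(S)$ then yields the advertised lower bound on $\pi(S)$.

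The chief obstacle is the non-reversibility of the chain: random-walk edges can carry mass outward and partially undo the controller's push, so $\pi$ does not admit an elementary product form. The resolution is that the $1/d(x)$ factor in the random-walk transition is exactly what renders the flow $\beta\phi(u)$ along each edge symmetric, so when the balance is stated at the level of a cut the random-walk contributions telescope and only the controller's $\eps\,\pi(B_i)$ term remains. Additional attention is required for intra-level edges inside each $B_i$ and for the special case $i = 1$, where the controller at vertices of $S$ may send mass outward if $S$ is not closed under neighbors; both contribute in a controlled fashion that does not affect the iterated geometric decay.
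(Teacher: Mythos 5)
The paper does not actually prove Theorem~\ref{thm:azar}; it cites it from~\cite{azar} and instead proves the analogous Lemma~\ref{lem:inverse} by explicitly mimicking Azar's argument. That argument is structurally very different from yours. Rather than analyzing a natural ``push toward $S$'' controller, it starts from the \emph{desired} stationary distribution --- $\pi^M(v) \propto d(v)$ for $v \in S$ and $\pi^M(x) \propto \beta^{\Delta(x,S)-1} d(x)$ for $x \notin S$ --- invokes the Metropolis construction to build a reversible chain $M$ with exactly this stationary distribution, checks that $M_{x,y} \ge \beta/d(x)$ on every edge so that $P_{x,y} := M_{x,y}/(1-M_{x,x})$ is an admissible $\eps$-biased walk, and then transfers the stationary bound from $M$ to $P$ via a hitting-time domination lemma. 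So the controller that witnesses the theorem is manufactured from the target distribution, not chosen a priori.

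Your route has a genuine gap at the ``iterate to geometric decay'' step. The cut balance
\[
\beta \sum_{\substack{uv \in E \\ u \in B_{i-1},\, v \in B_i}} \bigl(\phi(u)-\phi(v)\bigr) \;=\; \eps\,\pi(B_i)
\]
is an aggregate identity over all edges crossing a level. It does not by itself yield the pointwise estimate $\pi(x) \le \beta^{\Delta(x,S)-1} d(x)\,M$ that the rest of your argument needs: with heterogeneous degrees inside a level, $\phi$ need not be constant on levels, and the cut identity constrains only a weighted sum, not individual vertices. You correctly identify non-reversibility as the chief obstacle, but the proposed resolution (the $\beta\phi$ flux being symmetric edge-by-edge) only recovers the cut identity itself; it does not give you control of $\phi$ at a fixed vertex. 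Note also that the greedy controller you analyze is not the one Azar's proof produces --- the Metropolis chain, after removing self-loops, spreads its saved probability over all non-outward neighbors rather than concentrating the full $\eps$ mass on one inward neighbor --- so even if the greedy controller did achieve the bound, that would be a separate claim requiring its own proof. The whole point of the Metropolis route is to sidestep exactly the difficulty you run into: instead of analyzing the stationary distribution of a fixed, generally non-reversible chain, pick the distribution first and then exhibit an admissible reversible chain that realizes it.
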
 
\junk{
For the special case of $$-regular graphs, the following corollary is
immediate.
\begin{corollary}{Corollary~1 of~\cite{azar}}
\label{cor:azar}
Let $G = (V,E)$ be any connected $d$-regular graph and let $S$ be a
subset of $V$.  Then there is an $\eps$-bias strategy for which the
sum of the stationary probabilities for vertices in $S$ is at least
$\left(\frac{|S|}{n}\right)^{1 - c\eps}$, for a constant $c > 0$
depending only on $d$.
\end{corollary}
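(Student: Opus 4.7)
The plan is to construct an explicit $\eps$-bias strategy and identify the induced stationary distribution. For each vertex $x \notin S$, I would fix once and for all a neighbor $p(x)$ on a shortest path from $x$ to $S$ (so $\Delta(p(x),S) = \Delta(x,S) - 1$), breaking ties arbitrarily. The strategy to analyze is the deterministic controller that, whenever it is given a turn at $x \notin S$, moves the walk to $p(x)$; the controller's choices at vertices of $S$ are immaterial to the lower bound I am after and may be set arbitrarily.

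Guided by the form of the target expression, the candidate stationary measure to try is
\[
\mu(v) = d(v) \text{ for } v \in S, \qquad \mu(x) = \beta^{\Delta(x,S)-1}\, d(x) \text{ for } x \notin S,
\]
with $\beta = 1-\eps$. I would verify that $\mu$ is invariant for the induced transition matrix $P$, so that after normalization the unique stationary distribution $\pi$ of the irreducible biased chain is $\mu/\mu(V)$, giving $\pi(S) = \mu(S)/\mu(V)$, which is precisely the stated ratio. The intuition motivating this choice is that the BFS depth $\Delta(\cdot,S)$ of the walk projects onto a one-dimensional chain on $\mathbb{Z}_{\geq 0}$ with downward drift $\eps$ per step, whose stationary occupation profile decays geometrically at rate $\beta$ per layer; the factor $d(x)$ accounts for the usual unbiased random-walk weighting within each BFS layer.

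The technical heart of the proof is the vertex-level balance $(\mu P)(x) = \mu(x)$. I would split the incoming flow at each $x$ into an unbiased part $(1-\eps)\sum_{y \sim x} \mu(y)/d(y)$ and a controller part $\eps \sum_{y:\, p(y) = x} \mu(y)$. Using the triangle inequality $|\Delta(y,S)-\Delta(x,S)| \le 1$ along each edge together with the identity $1-\eps = \beta$, the geometric structure of $\mu$ should make these contributions telescope: the controller's $\eps$-push from each child $y$ of $x$ supplies exactly the extra weight needed beyond what the unbiased component provides, and the $\beta$-decay between consecutive layers is what matches the $\eps$-push in magnitude.

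The main obstacle will be the bookkeeping for edges whose endpoints sit in the same BFS layer and for deeper-layer neighbors $y$ with $p(y) \neq x$, both of which contribute asymmetrically to inflow and outflow. I would handle these edge-by-edge, arguing that same-layer edges $\{x,y\}$ contribute identical inflow to $x$ and outflow from $x$ (they cancel), while a deeper neighbor $y$ with $p(y) \neq x$ contributes only its unbiased $(1-\eps)\mu(y)/d(y)$ term, which carries the extra $\beta$ factor needed for consistency with $\mu(x) = \beta^{\Delta(x,S)-1} d(x)$. Pinning down the boundary layer $\Delta(\cdot,S)=1$, where $\beta^0 = 1$ meets the normalization $\mu(v) = d(v)$ on $S$, is what fixes the exact form of the candidate measure and closes the proof.
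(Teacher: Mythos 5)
Your proposal has two serious problems, one of scope and one of substance.

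\paragraph{Scope.} The statement you were asked to prove is the closed-form bound $\bigl(|S|/n\bigr)^{1-c\eps}$ for $d$-regular graphs. What you actually set out to establish is the ratio
\[
\frac{\sum_{v\in S} d(v)}{\sum_{v\in S} d(v) + \sum_{x\notin S}\beta^{\Delta(x,S)-1} d(x)},
\]
which is Theorem~\ref{thm:azar} (Azar et al.\ Theorem~2), not the corollary. Even if your argument succeeded, you would still owe the $d$-regular calculation (essentially the $n_i$-layer sum and the estimate of $\beta^L$ carried out in Section~\ref{sec:general.regular}) to convert the ratio into the $(|S|/n)^{1-c\eps}$ form. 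Your write-up simply stops at the ratio and calls it ``the stated ratio,'' which it is not.

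\paragraph{Substance.} The measure $\mu(v)=d(v)$ for $v\in S$, $\mu(x)=\beta^{\Delta(x,S)-1}d(x)$ for $x\notin S$ is \emph{not} stationary for the ``always push toward a BFS parent'' biased walk. The balance you hope ``telescopes'' fails already on the path $0$--$1$--$2$ with $S=\{0\}$: here $\mu(0)=1$, $\mu(1)=2$, $\mu(2)=\beta$, and the controller at $1$ pushes to $0$, so
\[
(\mu P)(0) = \mu(1)\Bigl(\tfrac{1-\eps}{2}+\eps\Bigr) = 1+\eps \ne 1 = \mu(0),
\qquad
(\mu P)(1) = \mu(0)+\mu(2) = 1+\beta \ne 2 = \mu(1).
\]
More generally the vertex-level balance requires $a_{k-1} + a_k\beta + a_{k+1}\beta^2 + \eps\beta\sum_{y:p(y)=x}d(y) = d(x)$ where $a_j$ counts neighbors of $x$ in layer $j$, and there is no reason for this to hold for an arbitrary $d$-regular graph; the layer profile of neighbors and the in-degree of the parent map enter asymmetrically and do not cancel. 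The ``same-layer edges cancel'' observation is correct but does not rescue the cross-layer terms.

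This is not a bookkeeping gap you can patch: the Azar-style proof (and the paper's analogous Lemma~\ref{lem:inverse}) deliberately does \emph{not} exhibit a controller and verify balance directly. Instead it uses the Metropolis construction to produce a chain $M$ whose stationary distribution is exactly the target $\pi^M$, shows that a renormalized version $P$ of $M$ has transition probabilities bounded below so that it qualifies as an $\eps$-biased (resp.\ inverse-degree-biased) walk, and then shows $\pi^P(S)\ge\pi^M(S)$ by comparing return times via a hitting-time dominance lemma. The existence of a good biased walk is obtained indirectly; one never identifies its stationary distribution in closed form. Your plan of guessing the stationary measure and verifying invariance is a genuinely different route, but it does not go through.
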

}

The bias in an $\eps$-biased walk is constant ($\eps$) at each vertex.
To analyze the cobra walk, we define a biased walk in which the bias
available to the controller at each vertex of the graph varies with
the vertex.  Specifically, we define the {\em inverse-degree-biased
walk}\/ with target $x$: if the walk is at $x$, then the next vertex
visited is a neighbor of $x$ chosen uniformly at random; if the walk
is at vertex $v \neq x$, then with probability $1 - 1/\degree{v}$ one
of the neighbors is selected uniformly at random and the walk moves
there, and with probability $1/\degree{v}$ a controller gets to select
which neighbor to move to.  The goal of the controller is to minimize
the hitting time to $x$.

Our main motivation for introducing inverse-degree-biased walks is the
following dominance argument.  For any pair of vertices $u$ and $v$,
let $H^*(u,v)$ denote the smallest hitting time to $v$ achievable by
an inverse-degree-biased walk with target $v$, starting at $u$.
Recall that $H(u,v)$ denotes the hitting time at $v$ for a cobra walk
starting at $u$.

\begin{lemma}
\label{lem:dominate}
For any pair of vertices $u$ and $v$, $H(u,v) \le H^*(u,v)$.
\end{lemma}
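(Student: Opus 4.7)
The plan is to establish the lemma by a coupling between the 2-cobra walk (starting at $u$) and an inverse-degree-biased walk with target $v$ (starting at $u$) run under an optimal controller strategy $\pi^*$ attaining $H^*(u,v)$. The invariant I will maintain is that at every time step, the biased walker's position coincides with the position of some walker in the cobra walk---call this the \emph{tracked} walker---which immediately implies that the cobra walk reaches $v$ no later than the biased walker does.

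Each coupled step is defined as follows. Suppose at time $t$ the tracked cobra walker and the biased walker are both at $w$. The cobra walker at $w$ draws $u_1, u_2$ independently and uniformly from $N(w)$, producing the two walkers that will occupy $u_1$ and $u_2$ at time $t+1$ under the 2-cobra dynamics. I will define the biased walker's step using the same $u_1, u_2$: it moves to $B := \pi^*(w)$ whenever $B \in \{u_1, u_2\}$, and to $u_1$ otherwise. The new tracked walker is the spawned cobra walker sitting at the biased walker's new position. To check validity of the coupling I would verify that the induced law on the biased walker's move is exactly the prescribed inverse-degree-biased law: the key identity is $\Pr[\text{next}=B] = 1 - (1 - 1/d(w))^2 = 2/d(w) - 1/d(w)^2$, which matches the sum of the controller's contribution $1/d(w)$ and the uniform step's contribution $(1 - 1/d(w))/d(w)$, and analogously $\Pr[\text{next}=u] = (1 - 1/d(w))/d(w)$ for each $u \ne B$.

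Because the invariant is preserved by construction, the tracked walker's hitting time of $v$ equals the biased walker's hitting time $T^*$, and the cobra walk's hitting time is at most that of any individual walker, so $H(u,v) \le T^*$ on the coupled probability space; taking expectations and using optimality of $\pi^*$ yields $H(u,v) \le H^*(u,v)$. I expect the main technical point to be the marginal-matching verification: this is where the specific choice of inverse-degree bias is essential, since $1/d(w)$ is exactly the collision probability of two independent uniform samples from $N(w)$, which is precisely what lets the controller's action be ``hidden inside'' the cobra walk's natural sampling step.
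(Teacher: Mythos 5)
Your proof is correct, and its engine is the same numerical fact the paper uses: for two independent uniform draws from $N(w)$, the probability at least one equals a fixed neighbor is $1-(1-1/d(w))^2 = 1/d(w) + (1-1/d(w))/d(w)$, which is exactly the inverse-degree-biased transition probability to the controller's target. The difference is in how the coupling is handled. The paper's proof establishes a per-step inequality $P^*_t(x,y) \ge P'_t(x,y)$ between the ``activation'' probability under the cobra walk and the biased walk's transition probability, and then appeals to a ``standard coupling argument'' to conclude. Your tracked-walker construction makes this coupling fully explicit and, in fact, exact rather than merely dominating: by letting the biased walker step to $\pi^*(w)$ precisely on the event that one of the two cobra samples collides with it, and to $u_1$ otherwise, you reproduce the inverse-degree-biased law on the nose while guaranteeing the biased walker always lands at an active cobra vertex. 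This is a genuine improvement in rigor: because the cobra ``walker'' is a growing set rather than a single token, the passage from per-step transition domination to hitting-time domination is not entirely routine, and your sample-path coupling closes that gap cleanly. One minor point worth adding if you write this up: an optimal time-independent controller $\pi^*$ may be randomized, so either couple to a sampled realization of $\pi^*(w)$ or, more simply, prove the inequality $H(u,v)\le H^{\pi}(u,v)$ for an arbitrary controller $\pi$ and then take the infimum.
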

\begin{proof}
Our proof is by a coupling argument.  Consider
the cobra walk starting at $u$ at time $0$.  For any vertex $x$ and
any time $t$, let $E^*_t(x)$ denote the event that $x$ is active in
the cobra walk at the start of round $t$.  For any vertex $x$, any
neighbor $y$ of $x$, and any time $t$, let $P^*_t(x,y)$ be the
probability that $y$ is active at the start of round $t+1$ conditioned
on the event that $x$ is active at the start of round $t$.  That is,
\[
P^*_t(x,y) = \Pr[E^*_{t+1}(y) \mid E^*_t(x)].
\]
We next consider any inverse-degree-biased walk starting from $u$.
Let $E_t(x)$ denote the event that the walk is at vertex $x$ at the
start of round $t$.  Let $P'_t(x,y)$ be the probability that the
inverse-degree-biased walk is at $y$ at the start of round $t+1$,
conditioned on the event that the walk is at $x$ at the start of round
$t$.  By the definition of the cobra walk and the
inverse-degree-biased walk, we have the straightforward derivation.
\begin{eqnarray*}
P^*_{t}(x,y) & = & 1 - \left(1 - \frac{1}{\degree{u}}\right)^2\\
& \ge & \frac{2}{\degree{u}} - \frac{1}{\degree{u}^2}\\
& = & \frac{1}{\degree{u}} + \left(1 - \frac{1}{\degree{u}}\right) \frac{1}{\degree{u}}\\
& \ge & P'_t(x,y).
\end{eqnarray*}

\junk{
that $y$ is
occupied in the cobra walk at the end of round $t$ is at least the
probability $P'_t(y)$ that the $x$-targeted walk is at $y$ at the end
of round $t$.  The proof is by induction on $t$.

The base case is $t = 0$.  The desired claim is trivially true since
both the walks start at $x$.  For the induction hypothesis, suppose
the claim holds for time $t$.  We now consider round $t + 1$ of the
two walks.  For an arbitrary vertex $v$ } 

We now invoke a standard coupling argument to obtain that for any
round $t$ and any vertex $x$, the probability that $x$ is active in
the cobra walk at the start of round $t$ is at least the probability
that the inverse-degree-biased walk is at $x$ at the start of round
$t$.  This also implies that the expected time for $u$ to become
active in the cobra walk is at most the expected time to hit $x$ in
the inverse-degree-biased walk.  Setting $x = v$ 
and choosing the inverse-degree-biased walk to be one that
minimizes the hitting time to $v$ yields $H(u,v) \le H^*(u,v)$.
\end{proof}

\junk{
We now split the analysis into two tracks.  For $d$-regular graphs, an
inverse-degree-biased walk is a $1/d$-biased walk, and hence we can
directly invoke Corollary~\ref{cor:azar}.  

For any vertex $x$, we define  walk as follows.  In
any step of the $x$-targeted walk, if the walk is currently at $u \neq
x$, then with probability $1/\degree{u}$, the walk proceeds to an
arbitrary vertex chosen by an arbitrary time-independent controller;
with probability $1 - 1/\degree{u}$, the walk proceeds to a vertex
chosen uniformly at random from the neighbors of $u$.  If the walk is
at $x$, then the walk proceeds to a vertex chosen uniformly at random
from the neighbors of $x$.
}

\subsection{$O(n^{2-1/\delta})$ hitting time for $\delta$-regular graphs}
\label{sec:general.regular}
For the case of $\delta$-regular graphs, the inverse-degree-biased
walk with target $v$ is exactly a $1/\delta$-biased walk with target
$v$.  We invoke Theorem~\ref{thm:azar} for $\delta$-regular graphs
with $\beta = 1 - 1/\delta$, and letting $S$ equal $\{v\}$.  We
calculate a lower bound on stationary probability at $v$ of the
$1/\delta$-biased walk with target $v$, we need to determine the
maximum value that $\sum_{x \neq v} \beta^{\Delta(x,v)-1}$ can take.
We calculate this maximum value, over all graphs with maximum degree
$\delta$, hence yielding an upper bound for $\delta$-regular graphs as
well.  Let $n_i$ denote the number of vertices within $i$ hops of $v$.
Then, we have
\begin{eqnarray*}
\sum_{x \neq v} \beta^{\Delta(v,x) -1} &=& \sum_{i \ge 1} n_i \beta^i.
\end{eqnarray*}
Since the degree is at most $\delta$, we have the following
constraints on the $n_i$s, given by the number of vertices within a
certain number of hops from $v$.
\[
\sum_{1 \le i \le j} n_i \le \min\left\{n-1, \delta\left(1 + \sum_{1 \le i < j} (\delta-1)^i\right)\right\}.
\]
Since $\beta < 1$, and the total number of vertices is bounded by $n$,
the sum $\sum_{i \ge 1} n_i \beta^i$ is maximized when
\begin{eqnarray*}
n_1 & = & \delta\\
n_j & = & \delta(\delta-1)^j \; 1 < j < L,
\end{eqnarray*}
where $L$ is the smallest $j$ such that $\delta\left(1 + \sum_{1 \le i < j}
(\delta-1)^i\right)$ is at least $n-1$.  Then, we have $n_L = n - 1
- \sum_{j < L} n_j$.  With elementary algebra, we calculate the value
of $L$ as
\begin{equation*}
L = \log_{(\delta-1)} \left[ \dfrac{(n-1)}{\delta} (\delta-2) + 1 \right].
\end{equation*} 

We now bound $\sum_{x\neq v} \beta^{\Delta(x,v) -1}$ as follows.
\begin{eqnarray*}
\sum_{x \neq v} \beta^{\Delta(v,x) -1} &=& \delta\beta^0 + (\delta-1)\delta \beta^2 + \delta(\delta-1)^2 \beta^2\\
& & + \ldots + \delta(\delta-1)^{L-1}\beta^{L-1} \\
&=& d \left( \sum_{i =0}^{L-1} ((\delta-1)\beta)^i\right) \\
&=& d \left[ \dfrac{((\delta-1)\beta)^L - 1}{(\delta-1)\beta -1 } \right] \\
&\leq& C \left[ ((\delta-1)\beta)^L \right] 
\end{eqnarray*}
where $C = \delta/ ((\delta-1)\beta - 1)$.   

With some minimal algebra, we can see that:
\begin{equation*}
(\delta-1)^{\log_{\delta-1} (n-1)(\delta-2)/\delta + 1} = (n-1)(\delta-2)/\delta + 1 <  n
\end{equation*} 
and that
\begin{eqnarray*} 
\beta^L &=&  (1-1/\delta)^{\log_{\delta-1}(n-1)(\delta-2)/(\delta-1) + 1}\\
& \leq & (\delta-1)^{(-1/\delta) \log_{\delta-1} ((n-1)(\delta-1)/(\delta-2) + 1)} \\
&=& \left[\dfrac{(n-1)(\delta-2) + \delta}{\delta}\right]^{-1/\delta} < n^{-1/\delta}.
\end{eqnarray*} 
Therefore, the stationary distribution at $v$ is at least $1/(1 +
n^{1-1/\delta})$, giving a return time to $v$ of at most $1 +
n^{1-1/\delta}$.  Using the tree-traversal argument and noting that it
will take, in expectation, $d$ returns to current vertex $v$ to make
one ``progress'' step in the traversal in expectation for a total cost
of $\delta n^{1 -1/\delta}$ per progress step. Thus, the total cost of
coverage will be $2\delta Cn^{2-1/\delta} = O(n^{2-1/\delta})$ for
constant $\delta$.
\begin{theorem}
\label{thm:regular}
Let $G$ be a finite, $\delta$-regular graph. Then the hitting time of
a cobra walk on $G$ is in $O(n^{2-\frac{1}{\delta}})$, for any
constant $\delta$.
\end{theorem}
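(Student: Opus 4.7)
The plan is to chain together the dominance result of Lemma~\ref{lem:dominate} with the stationary-probability lower bound of Theorem~\ref{thm:azar}, and finally convert a return-time bound into a hitting-time bound via a standard tree-traversal argument. The first step is immediate: since $G$ is $\delta$-regular, the inverse-degree-biased walk with target $v$ is literally a $1/\delta$-biased walk with $\beta = 1 - 1/\delta$, so by Lemma~\ref{lem:dominate} it suffices to upper bound $H^*(u,v)$ for this biased walk. Applying Theorem~\ref{thm:azar} with $S = \{v\}$ reduces the task to a purely combinatorial one: give a good upper bound, valid over all $n$-vertex $\delta$-regular graphs, on
\begin{equation*}
\Phi(v) \;=\; \sum_{x \neq v} \beta^{\Delta(x,v) - 1} \;=\; \sum_{i \ge 1} n_i \beta^i / \beta,
\end{equation*}
where $n_i$ counts vertices at distance exactly $i$ from $v$.

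The core calculation is this extremal bound on $\Phi(v)$. Because $\beta < 1$, the sum is maximized by pushing as much mass as possible into small values of $i$; since each vertex has degree $\delta$, the BFS tree from $v$ branches by at most $\delta$ at the root and $\delta - 1$ at deeper levels, so the optimum is attained by the $\delta$-regular tree truncated at the first level $L$ at which $n$ vertices are exhausted. One then takes $n_1 = \delta$, $n_j = \delta(\delta-1)^{j-1}$ for $1 < j < L$, and a leftover $n_L$; the resulting geometric sum in $(\delta-1)\beta$ is bounded by $C \bigl((\delta-1)\beta\bigr)^L$ for $C = \delta/((\delta-1)\beta - 1)$. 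Plugging in the closed form for $L$ and the identity $(\delta-1)\beta = \delta - 1 - 1 + 1/\delta$ and simplifying yields the clean bound $\beta^L \le n^{-1/\delta}$, and hence $\Phi(v) = O(n^{1 - 1/\delta})$.

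Feeding this into Theorem~\ref{thm:azar} gives a stationary probability at $v$ of at least $\Omega(1/n^{1-1/\delta})$, so by the standard return-time identity the expected return time to $v$ in the biased walk is $O(n^{1-1/\delta})$. The final step converts this into a hitting time bound: starting from $u$, we imagine the controller steering the walk along a shortest path (or any spanning-tree path) from $u$ to $v$, taking one unit of progress each time it is activated at a non-target vertex. Since the controller is activated at each step with probability $1/\delta$, the expected number of returns needed for one unit of progress is $O(\delta)$, giving an $O(\delta \cdot n^{1-1/\delta}) = O(n^{1-1/\delta})$ cost per progress step and at most $n$ progress steps for a total of $O(n^{2 - 1/\delta})$. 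Combining with Lemma~\ref{lem:dominate} gives the stated bound on $H(u,v)$.

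The main obstacle is the bound on $\beta^L$: the exponent $L$ depends logarithmically on $n$ in base $\delta - 1$, while $\beta = 1 - 1/\delta$ must be handled carefully so that the $-1/\delta$ in the exponent of $n$ emerges cleanly after taking logs. All other pieces --- the dominance reduction, the application of Theorem~\ref{thm:azar}, and the return-time-to-hitting-time conversion --- are essentially off the shelf once the extremal structure of the worst-case graph is identified.
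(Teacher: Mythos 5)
Your proof follows the paper's own argument step by step: reduce to a $1/\delta$-biased walk via Lemma~\ref{lem:dominate}, apply Theorem~\ref{thm:azar} with $S=\{v\}$, bound $\sum_{x\neq v}\beta^{\Delta(x,v)-1}$ by the extremal balanced $\delta$-ary tree to get $\beta^L\le n^{-1/\delta}$ and a return time of $O(n^{1-1/\delta})$, then convert to a hitting time by paying $O(\delta)$ returns per progress step along a shortest path. This is the same route the paper takes (the final return-time-to-hitting-time conversion is equally informal in both; the more careful version of that step appears in the paper's Equation~\ref{eqn:hit}, which applies to general graphs and, restricted to the regular case, yields exactly your bound).
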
 

\junk{

{\bf MM:  This should be spelled out more clearly.  I don't get all the 
steps here.  I'm assuming what is meant is that we take a shortest path
from $u$ to $v$, and we make progress on average once every $d$ returns.
I'm not clear where the $C$ came back from in the expression below, or the 2 
in the $2dc$.  I'm also not clear why this works, since what we seem to really
want is the ``return time conditioned on having gone the wrong direction away
from the target'', and not the return time to the vertex.}

{\bf MM:  do we mean cover time or hitting time in the theorem below?}
}

\junk{
{\bf MM:  I'm pretty sure we should have $d(x)$ and not $d_x$ in the above
paragraph to be consistent with what comes previously -- please fix if that's
the case.}  

{\bf MM:  The paragraph below never made sense to me, and probably is what
needs to be rewritten.  I'm putting a rewritten form below but probably
best to scratch and rewrite with a cleaner argument?}
To see this, note that clearly $(1-\beta) < 1$ and hence is decreasing
for $(1-\beta)^y$ for $y > 1$. There grouping together all vertices at
distance $1,2,\ldots$, it is clear that the maximum is satisfied in
the case where each vertex has a maximum branching factor of $d-1$
away from $v$ and in which there are no cycles. Hence, a $d$-ary
balanced tree.

{\bf That is, the maximum for the sum is achieved when each vertex has $d-1$
children in the direction away from $v$, and where we pessimistically
assume that there are no cycles. This corresponds to a $d$-ary
balanced tree.}

{\bf MM:  I feel like we could just give the depth $L$ below without
bothering with the elementary calculation.}

We next calculate the depth of the tree, $L$. We have:
\begin{equation*}
n = 1 + d + (d-1)d  + (d-1)^2 d + \ldots + (d-1)^{L-1}
\end{equation*}
and hence:
\begin{equation*}
d \left[ \dfrac{(d-1)^L - 1}{d-2} \right] = n -1 
\end{equation*}
yielding
\begin{equation*}
L = \log_{(d-1)} \left[ \dfrac{(n-1)}{d} (d-2) + 1 \right].
\end{equation*} 
}

\subsection{$O(n^{11/4})$ hitting time for general graphs}
\label{sec:general.nonregular}
To bound $H(u,v)$ in a general graphs, 
let $P$ be a shortest-hop path
from $u$ to $v$, given by the sequence of vertices $u = u_0, u_1, u_2,
\ldots, u_{|P|} = v$.  Then
\begin{eqnarray}
\label{eqn:hit_u_v}
H(u,v) \le \sum_{i = 0}^{|P| - 1} H(u_i, u_{i+1}).
\end{eqnarray}
Let $R(x)$ denote the minimum return time to $x$ among all
inverse-degree-biased walks with target $x$.  We have
\begin{eqnarray}
\label{eqn:return}
R(x) & = & 1 + \sum_{y\in N(x)} \frac{H^*(y,x)}{\degree{x}}.
\end{eqnarray}
Furthermore, we have
\begin{eqnarray*}
H(u_i, u_{i+1}) 
& \le & 1 + \sum_{y \in N(u_i), y \neq u_{i+1}} H(y,u_{i+1})/\degree{u_i}\\
& \le & 1 + \sum_{y \in N(u_i), y \neq u_{i+1}} \frac{H(y,u_i) + H(u_i,u_{i+1})}{\degree{u_i}}\\
& \le & 1 + \sum_{y \in N(u_i), y \neq u_{i+1}} \frac{H^*(y,u_i) + H(u_i,u_{i+1})}{\degree{u_i}}\\
& \le & R(u_i) + \sum_{y \in N(u_i), y \neq u_{i+1}} H(u_i,u_{i+1})/\degree{u_i}\\
& = & R(u_i) + \left(1 - \frac{1}{\degree{u_i}}\right) H(u_i,u_{i+1}).\\
\end{eqnarray*}
(The second step follows from the fact that the hitting time from $y$
to $u_{i+1}$ is at most the sum of the hitting times from $y$ to $u_i$
and from $u_i$ to $u_{i+1}$.  The third step follows from
Lemma~\ref{lem:dominate}.  The fourth step follows from
Equation~\ref{eqn:return}.  The final equality holds since $u_i$ has
$\degree{u_i}$ neighbors.)

The above equation easily yields $H(u_i,u_{i+1}) \leq \degree{u_i} R(u_i)$,
which combined with Equation~\ref{eqn:hit_u_v} yields
\begin{eqnarray}
H(u,v) \le \sum_{i = 0}^{|P| - 1} H(u_i, u_{i+1}) \le \sum_{i = 0}^{|P| - 1} \degree{u_i} R(u_i). \label{eqn:hit}
\end{eqnarray}

We prove the following extension of Theorem~\ref{thm:azar}, applied to
inverse-degree-biased walks.  For the sake of completeness, we include
the proof, which essentially mimics the proof of
Theorem~2 in~\cite{azar}.
\begin{lemma}
\label{lem:inverse}
Let $G = (V,E)$ be a connected graph and $S$ be an arbitrary subset of
$V$.  For a given path $P$, let $\sigma(P)$ denote the product
$\prod_{y \in P} (1 - 1/\degree{y})$.  For $v \in S$ and $x \in V -
S$, let $\ws(x,v)$ denote the maximum, over all paths $P$ from $x$ to
$v$, of $\sigma(P)$.  Let $\ws(x,S)$ denote the minimum, over all
$v \in S$, of $\ws(x,v)$.  There is an inverse-degree-biased walk for
which the stationary probability at $S$ (i.e. the sum of the
stationary probabilities of $v\in S$) is at least
\begin{eqnarray*}
\frac{\sum_{v\in S} \degree{v}}{\sum_{v \in S} \degree{v} + \sum_{x\notin S} (\ws(x,S) \degree{x})}.
\end{eqnarray*} 
\end{lemma}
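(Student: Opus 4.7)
The plan is to mimic the proof of Theorem~\ref{thm:azar} in~\cite{azar}, with the constant $\eps$-bias replaced by the vertex-dependent bias $1/\degree{y}$ and the distance-based factor $\beta^{\Delta(x,S)-1}$ replaced by the path-product factor $\ws(x,S)$.

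The first step is to specify a concrete controller: for each $x \notin S$, fix a path $P^*(x)$ from $x$ to $S$ attaining $\ws(x,S)$, and let $\phi(x)$ be the neighbor of $x$ that is the successor of $x$ on $P^*(x)$. Whenever the controller is activated at $x$ (which happens with probability $1/\degree{x}$), it sends the walk deterministically to $\phi(x)$. This choice ensures that each step of the controlled walk multiplies the ``survival weight'' of the current vertex by exactly the factor $(1-1/\degree{x})$ appearing in $\sigma$, so that the survival weight along any realized trajectory from a vertex $x$ back to $S$ is bounded above by $\ws(x,S)$.

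The second step is to introduce the candidate weight $\mu$ defined by $\mu(v) = \degree{v}$ for $v \in S$ and $\mu(x) = \ws(x,S)\,\degree{x}$ for $x \notin S$, and to argue that the stationary distribution $\pi$ of the controlled chain satisfies $\pi(x) \le c\,\mu(x)$ for all $x$, with the correct scaling $\pi(v) = c\,\degree{v}$ on $S$, where $c$ is a normalizing constant. Summing over all vertices then gives $1 \le c\sum_v \mu(v)$, so $c \ge 1/Z$ with $Z = \sum_v \mu(v)$, and hence $\pi(S) \ge \sum_{v \in S} \degree{v}/Z$, which is precisely the claimed bound.

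The main obstacle is establishing the domination $\pi \le c\,\mu$ outside $S$. Following Azar, I would decompose the chain into excursions starting from a reference vertex in $S$, and bound the expected number of visits to $x$ per excursion by a constant times $\ws(x,S)\,\degree{x}$. The point is that to visit $x$ during an excursion, the chain must follow some path from $S$ out to $x$; the contribution of any such path to the expected visit count involves a product $\prod (1 - 1/\degree{y}) \cdot 1/\degree{y}$ over the intermediate vertices, and a careful rearrangement shows this contribution is dominated by $\ws(x,S)\,\degree{x}$, regardless of which path the chain actually takes, precisely because $\ws$ was defined as a \emph{maximum} over paths. The delicate bookkeeping step is to combine the uniform-random $1-1/\degree{y}$ portion with the controller's $1/\degree{y}$ deterministic push and verify that the per-edge contribution is absorbed by the $\degree{x}$ factor in $\mu(x)$; this mirrors the analogous calculation in the original proof, with $\beta$ systematically replaced by the vertex-dependent $1 - 1/\degree{y}$.
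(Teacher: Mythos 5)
The paper proves this lemma by running the Metropolis construction \emph{in reverse}: it first writes down the desired stationary distribution $\pi^M(v) = \gamma\degree{v}$ on $S$ and $\pi^M(x) = \gamma\ws(x,S)\degree{x}$ off $S$, invokes the Metropolis Theorem to produce a reversible chain $M$ with exactly this stationary measure, then converts $M$ to a self-loop-free chain $P$ and verifies (via the key inequality $\ws(y,S) \ge (1-1/\degree{x})\ws(x,S)$) that $P_{x,y} \ge (1-1/\degree{x})/\degree{x}$, so $P$ is a legitimate inverse-degree-biased walk; finally a hitting-time/return-time comparison shows $\pi^P(S) \ge \pi^M(S)$. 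You instead propose to fix a concrete path-following controller in advance and then argue about its stationary distribution via excursion decomposition. That is a genuinely different route, and unfortunately it has two real gaps.

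First, you assert without justification that the stationary distribution of your concrete chain is \emph{proportional to degree on $S$}, writing $\pi(v) = c\degree{v}$ for $v \in S$. For a simple random walk this is true, but once the controller modifies transitions off $S$, the stationary probabilities on $S$ are no longer proportional to degree in general; you would at best be able to define $c = \min_{v\in S}\pi(v)/\degree{v}$, and then your normalization $1 \le cZ$ no longer follows, breaking the final step. The paper's Metropolis approach never has to confront this because it \emph{chooses} the chain so that $\pi^M$ is exactly the target measure by construction; the work is then shifted to showing the resulting chain is an admissible inverse-degree-biased walk, which reduces to a local one-step inequality rather than a global stationary-distribution computation.

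Second, the ``delicate bookkeeping step'' is not just delicate—it is the entire proof, and the argument you sketch does not obviously close. You say each path's contribution to the expected visit count at $x$ is dominated by $\ws(x,S)\degree{x}$ ``because $\ws$ was defined as a maximum over paths,'' but this conflates a per-path bound with a bound on the \emph{sum} over the (potentially exponentially many) random trajectories from $S$ out to $x$. There is also a directionality issue: an excursion from $S$ to $x$ proceeds \emph{against} the controller's pull, so the relevant per-step factor is $(1-1/\degree{y})/\degree{y}$ for the uniform portion (the controller never helps you escape), and it is not clear that summing these products over all trajectories collapses to the single-path quantity $\ws(x,S)$ with only a $\degree{x}$ factor to spare. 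Without this, you have no domination $\pi(x) \le c\mu(x)$. If you want to keep an excursion-theoretic flavor, you would need to make this summation rigorous; otherwise, the cleaner route is the paper's: apply Metropolis to the candidate measure, check the local inequality $M_{x,y} \ge (1-1/\degree{x})/\degree{x}$, and compare $\pi^P(S)$ with $\pi^M(S)$ via hitting times.
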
 
\begin{proof}
Define a probability distribution $\pi^M$: $\pi^M(v)
= \gamma \degree{v}$ for $v \in S$ and $\pi^M(x)
= \gamma \widehat{\sigma}(x,S) \degree{x}$ for $x \notin S$, where
\[
\gamma = \frac{1}{\sum_{v \in S} \degree{v} + \sum_{x \notin S} (\ws(x,S) \degree{x})}.
\] 
Let $M$ be the transition probability matrix for which the stationary
distribution given by Metropolis
Theorem~\cite[Theorem~1]{azar}~\cite{metropolis} is $\pi^M$.  Define
the following transition probability matrix: $P_{x,x} = 0$ for all
$x \in V$, and for $y \neq x$, $P_{x,y} = M_{x,y}/(1 - M_{x,x})$.  We
now argue that $M_{x,y} \ge (1 - 1/\degree{x})/\degree{x}$ for any
neighbor $y$ of $x$.  By the Metropolis Theorem, $M_{x,y}$ is either
$1/\degree{x}$ or $\pi^M(y)/(\degree{y}\pi^M(x))$.  In the former
case, the desired lower bound on $M_{x,y}$ is trivial.  In the latter
case, we have
\[M_{x,y} = \frac{\ws(y,S)}{\degree{x} \ws(x,S)} \ge \frac{1 - 1/\degree{x}}{\degree{x}},
\]
where the last inequality follows from the fact that $\ws(y,S) \ge (1
- 1/\degree{x}) \ws(x,S)$.  We thus obtain $P_{x,y} \ge M_{x,y} \ge (1
- 1/\degree{x})/\degree{x}$ for any neighbor $y$ of $x$.  So $P$ is a
transition probability matrix of an inverse-degree-biased walk.

We next show that for each vertex in $S$, its stationary probability
under $P$ is at least its stationary probability under $M$.  Let
$H^M(x,v)$ and $H^P(x,v)$ denote the hitting time to $v$ under the
walks defined by $P$ and $M$, respectively, with start vertex $x$.
For each vertex $x \neq v$, we have
\[
H^M(x, v) = 1 + M_{x,x} H^M(x, v) + \sum_{y \in N(x)} M_{x,y} H^M(y,v),
\]
which yields
\begin{eqnarray*}
H^M(x,v) & = & \frac{1}{1 - M_{x,x}} + \sum_{y \in N(x)} \frac{M_{x,y}}{1 - M_{x,x}} H^M(y,v)\\
& \ge & 1 + \sum_{y \in N(x)} P_{x,y} H^M(y,v).
\end{eqnarray*}
By~\cite[Lemma~1]{azar}, for every $x \in V$ and $v \in S$,
$H^M(x,v) \ge H^P(x,v)$.  If we denote the return time to vertex $v$
in the walks under $P$ and $M$ by $R^P(v)$ and $R^M(v)$, respectively,
we obtain that
\begin{eqnarray*}
R^M(v) & = & 1 + \sum_{y \in N(v)} M_{v,y} H^M(y,v)\\
& = & 1 + \sum_{y \in N(v)} P_{v,y} H^M(y,v)\\
& \ge & 1 + \sum_{y \in N(v)} P_{v,y} H^P(y,v)\\
& = & R^P(v).
\end{eqnarray*}
We thus have
\[
\pi^P(S) \ge \pi^M(S) = \frac{\sum_{v \in S} \degree{v}}{\sum_{v \in S}\degree{v} + \sum_{x \notin S} \degree{x} \ws(x, S)}.
\]
\end{proof}
\begin{corollary}
\label{cor:inverse}
Let $v$ be an arbitrary vertex of $G$.  There exists a bias strategy
for the inverse-degree-biased walk such that the return time to $v$ in
the walk is at most
\[
\frac{\degree{v} + \sum_{x \neq v} \ws(x, v)\degree{x}}{\degree{v}}.
\]
\end{corollary}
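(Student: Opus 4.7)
The plan is to derive this corollary directly from Lemma~\ref{lem:inverse} by specializing to the singleton set $S = \{v\}$. The lemma guarantees the existence of an inverse-degree-biased walk whose stationary probability at $S$ is at least
\[
\frac{\sum_{u \in S}\degree{u}}{\sum_{u \in S}\degree{u} + \sum_{x \notin S} \ws(x,S)\degree{x}},
\]
so setting $S = \{v\}$ collapses $\ws(x,S)$ to $\ws(x,v)$ and yields a lower bound of
\[
\frac{\degree{v}}{\degree{v} + \sum_{x \neq v} \ws(x,v)\degree{x}}
\]
on the stationary probability at $v$.

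To translate this into a statement about return times, I would then invoke the standard fact that for an irreducible Markov chain on a finite state space, the expected return time to a state is the reciprocal of its stationary probability. Applying this to $v$ under the inverse-degree-biased walk produced by Lemma~\ref{lem:inverse} converts the lower bound on $\pi(v)$ into the upper bound
\[
R(v) \le \frac{\degree{v} + \sum_{x \neq v} \ws(x,v)\degree{x}}{\degree{v}}
\]
claimed in the corollary, completing the argument. There is no real obstacle here: the only subtlety worth noting is that the transition matrix $P$ constructed in the proof of Lemma~\ref{lem:inverse} is that of a valid inverse-degree-biased walk (this is verified inside the lemma), so the reciprocal identity applies to exactly the class of walks over which $R(v)$ is defined.
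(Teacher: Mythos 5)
Your proposal matches the paper's proof exactly: specialize Lemma~\ref{lem:inverse} to $S = \{v\}$, then invert the resulting lower bound on the stationary probability at $v$ to obtain the claimed upper bound on the return time. No differences worth noting.
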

\begin{proof}
The result follows from Lemma~\ref{lem:inverse}, setting $S = \{v\}$
and noting that the return time to $v$ equals the inverse of the
stationary probability at $v$.
\end{proof}

For our analysis, it is more convenient to work with an upper bound on
$\ws(x,v)$ (for each $x$ and $v$) as given the following lemma.  Let
$p(y,x)$ denote the weight of the shortest path from $y$ to $x$ in
$G$, where we assign a weight of $1/\degree{z}$ for each vertex $z$.
\begin{lemma}
\label{lem:ws}
For all vertices $x$ and $v$, $\ws(x,v)$ is at most $e^{-p(x,v)}$.
\end{lemma}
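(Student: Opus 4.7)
The plan is to reduce the claim to the elementary inequality $1 - t \le e^{-t}$, valid for all real $t$, applied termwise to the product defining $\sigma(P)$.

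First I would fix an arbitrary path $P$ from $x$ to $v$ and bound $\sigma(P)$ from above. Applying $1 - t \le e^{-t}$ with $t = 1/\degree{y}$ for each vertex $y$ of $P$ gives
\[
\sigma(P) \;=\; \prod_{y \in P} \left(1 - \frac{1}{\degree{y}}\right) \;\le\; \prod_{y \in P} e^{-1/\degree{y}} \;=\; \exp\!\left(-\sum_{y \in P} \frac{1}{\degree{y}}\right).
\]
Next I would observe that the exponent $\sum_{y \in P} 1/\degree{y}$ is exactly the weight of the path $P$ under the vertex-weighting $z \mapsto 1/\degree{z}$ used to define $p(\cdot,\cdot)$. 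Since $p(x,v)$ is by definition the minimum such weight over all $x$-to-$v$ paths, we have $\sum_{y \in P} 1/\degree{y} \ge p(x,v)$, and hence $\sigma(P) \le e^{-p(x,v)}$.

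Finally, since the bound $\sigma(P) \le e^{-p(x,v)}$ holds for every path $P$ from $x$ to $v$, it holds for the path that attains the maximum, yielding $\ws(x,v) = \max_P \sigma(P) \le e^{-p(x,v)}$, as required.

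There is no real obstacle here; the only mild subtlety is making sure the convention about which vertices of the path are included in $\sigma(P)$ matches the convention used in the definition of $p(x,v)$, so that the termwise inequality lines up exactly with the shortest-path weight. Once that bookkeeping is consistent, the lemma is a one-line consequence of $1 - t \le e^{-t}$.
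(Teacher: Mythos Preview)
Your proof is correct and follows essentially the same route as the paper: apply $1 - t \le e^{-t}$ factorwise to $\sigma(P)$ and then compare the resulting exponent to the shortest-path weight $p(x,v)$. In fact you are slightly more careful than the paper, which writes $\prod_{y \in P'} e^{-1/\degree{y}} = e^{-p(x,v)}$ where an inequality (using that $p(x,v)$ is a minimum over paths) is what is really meant; your argument makes that step explicit.
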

\begin{proof}
Let $P'$ denote a path between $x$ and $v$ such that $\ws(x,v)
= \sigma(P)$.  Then, we have
\[
\ws(x,v) = \prod_{y \in P'} (1 - 1/\degree{y}) \le \prod_{y \in P'} e^{-1/\degree{y}} = e^{-p(x,v)}.
\]
\end{proof}

\junk{
\begin{lemma}
\label{lem:biased_walk}
For any vertex $x$, we have
\[
R_x(x) \le \frac{\degree{x} + \sum_{y \neq x} (\degree{y} e^{-p(y,x)})}{\degree{x}},
\]
where $p(y,x)$ is the weight of the shortest path from $y$ to $x$ in
$G$, where we assign a weight of $1/\degree{z}$ for each vertex $z$.
\end{lemma}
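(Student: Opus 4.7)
The plan is to apply the elementary inequality $1 - t \le e^{-t}$ factor-by-factor inside the product that defines $\sigma$, and then recognize the resulting exponent as a weighted path length bounded below by $p(x,v)$. First I would fix a path $P^*$ from $x$ to $v$ that attains the maximum in the definition of $\ws(x,v)$; such a maximizer exists because there are only finitely many simple paths between $x$ and $v$, so $\ws(x,v) = \sigma(P^*) = \prod_{y \in P^*}(1 - 1/\degree{y})$.

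Next I would apply $1 - 1/\degree{y} \le e^{-1/\degree{y}}$ to each factor, turning the product into a single exponential:
\[
\ws(x,v) \le \prod_{y \in P^*} e^{-1/\degree{y}} = \exp\!\Bigl(-\sum_{y \in P^*} 1/\degree{y}\Bigr).
\]
The sum in the exponent is precisely the weight of the particular path $P^*$ under the vertex-weighting $1/\degree{z}$, while $p(x,v)$ is by definition the minimum such weight over all paths from $x$ to $v$. Thus $\sum_{y \in P^*} 1/\degree{y} \ge p(x,v)$, and since $t \mapsto e^{-t}$ is monotonically decreasing this yields $\ws(x,v) \le e^{-p(x,v)}$, as required.

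There is essentially no obstacle; the only point that warrants a moment's care is that the two optimizations go in opposite directions (maximum over paths of a product of factors $\le 1$ versus minimum over paths of a sum of positive weights), and one must check that they compose correctly. The resolution is immediate: the bound above holds for \emph{any} specific path we care to insert into the exponent, and in particular for the maximizer $P^*$, whose weight is at least $p(x,v)$ by definition of the minimum.
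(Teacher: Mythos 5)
Your argument is a correct proof of a \emph{different} statement, not of the lemma you were asked to prove. What you establish is the inequality $\ws(x,v) \le e^{-p(x,v)}$ for the path-product quantity $\ws$. That is the paper's Lemma~\ref{lem:ws}, and your proof of it is essentially identical to the paper's: pick a maximizing path $P'$, apply $1 - t \le e^{-t}$ to each factor, and lower-bound the resulting exponent by the minimum-weight path length. That part is fine as far as it goes.

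But the lemma you were given asserts a bound on the \emph{return time} $R_x(x)$ of an inverse-degree-biased walk targeted at $x$, namely $R_x(x) \le \bigl(\degree{x} + \sum_{y \neq x} \degree{y}\, e^{-p(y,x)}\bigr)/\degree{x}$. Your proof never mentions a walk, a stationary distribution, or a return time. To get from the $\ws$ estimate to the stated return-time bound you also need the substantive ingredient, which is the paper's Lemma~\ref{lem:inverse} (via Corollary~\ref{cor:inverse}): there exists an inverse-degree-biased strategy whose stationary probability at $\{x\}$ is at least $\degree{x}/\bigl(\degree{x} + \sum_{y \neq x} \ws(y,x)\,\degree{y}\bigr)$. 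Establishing this is the heart of the matter; the paper does it with a Metropolis-chain construction (defining a target stationary distribution proportional to $\degree{y}\ws(y,S)$ outside $S$ and to $\degree{y}$ on $S$, checking that the resulting transition kernel is in fact an inverse-degree-biased walk, and comparing hitting and return times under the two chains). Only after that does one plug in $\ws(y,x) \le e^{-p(y,x)}$ and convert stationary probability to return time. Your proposal skips the construction entirely, so it leaves the main claim unproved.
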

\begin{proof}
Invoking Theorem~\ref{thm:azar} (with $S = \{x\}$), we obtain that there exists a 
\end{proof}

The proof of Lemma~\ref{lem:biased_walk} makes use of Theorem 2
in~\cite{BiasedRandomWalks}.
}

\begin{lemma}
\label{lem:bound}
For start vertex $u$ and target $v$, let $P$ be a shortest-hop path
from $u$ to $v$, given by the sequence of vertices $u = u_0, u_1, u_2,
\ldots, u_{|P|} = v$.  Then, we have
\[
\sum_{i = 0}^{|P| - 1} \degree{u_i} R(u_i) = O(n^{11/4}).
\]
\end{lemma}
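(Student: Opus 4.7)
The plan is to combine Corollary~\ref{cor:inverse} and Lemma~\ref{lem:ws} into the per-index bound
\[
\degree{u_i} R(u_i) \;\le\; \degree{u_i} \;+\; \sum_{x \ne u_i} \degree{x}\, e^{-p(x, u_i)}.
\]
Summing over $i$, the first term contributes $O(n)$: since $P$ is a shortest-hop path, every vertex $w \in V$ is adjacent to at most three path vertices (otherwise $P$ could be shortened through $w$), so
\[
\sum_{i=0}^{|P|-1} \degree{u_i} \;=\; \sum_{w \in V} |N(w) \cap P| \;\le\; 3n.
\]
The substantive task is therefore to bound
\[
\Phi \;:=\; \sum_{i=0}^{|P|-1} \sum_{x \ne u_i} \degree{x}\, e^{-p(x, u_i)} \;=\; O(n^{11/4}).
\]

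To bound $\Phi$, I would fix a degree threshold $\tau = n^{3/4}$ and split the inner sum by whether $\degree{x} \le \tau$ or $\degree{x} > \tau$. The low-degree contribution is handled by the trivial bound $e^{-p} \le 1$ together with $\sum_{x:\,\degree{x} \le \tau} \degree{x} \le n\tau = n^{7/4}$ and $|P| \le n$, which already yields at most $n \cdot n^{7/4} = n^{11/4}$. The high-degree contribution is more delicate: there are at most $2m/\tau = O(n^{5/4})$ such vertices, and a crude bound gives $O(n^3)$; we must exploit the decay of $e^{-p(x, u_i)}$ as $u_i$ ranges over $P$ to save the remaining factor of~$n^{1/4}$.

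The key geometric fact enabling this saving is that, because $P$ is shortest-hop, $\Delta(u_i, u_j) = |i-j|$; the triangle inequality then yields $\Delta(x, u_i) \ge |i - i^*(x)|/2$, where $i^*(x) := \arg\min_i \Delta(x, u_i)$. This allows $p(x, u_i)$ to be bounded below in terms of $|i - i^*(x)|$ and the degrees along the relevant stretch of $P$, so that $\sum_i e^{-p(x, u_i)}$ decays geometrically as $i$ moves away from $i^*(x)$. Combining the resulting per-$x$ decay with $\sum_{x:\,\degree{x} > \tau} \degree{x} \le 2m \le n^2$ gives the required $O(n^{11/4})$. An auxiliary partition of the path indices $i$ into $\degree{u_i} \le \tau$ and $\degree{u_i} > \tau$ (the latter having size at most $3n/\tau$ by the first-term bound) may streamline the bookkeeping.

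The main obstacle is the per-high-degree-vertex decay: the trivial bound $\sum_i e^{-p(x, u_i)} \le |P| \le n$ is short by exactly the factor $n^{1/4}$ we need, and extracting this factor requires combining the shortest-hop structure of $P$ with a careful weighted-distance estimate in the regime where both $x$ and many nearby path vertices can have high degree. The threshold $\tau = n^{3/4}$ is chosen precisely to equalize the low- and high-degree contributions.
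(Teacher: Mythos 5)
Your opening reduction matches the paper's: combine Corollary~\ref{cor:inverse} and Lemma~\ref{lem:ws}, handle the $\sum_i \degree{u_i}$ term by the ``at most three path-neighbors'' argument to get $3n$, and reduce to bounding $\Phi = \sum_i \sum_{x} \degree{x}\,e^{-p(x,u_i)}$. After that, however, there is a genuine gap, and the heuristic you offer for closing it is misdirected.

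The paper's proof does not split by a degree threshold; instead it proves a \emph{per-$x$} bound $\sum_{i} e^{-p(x,u_i)} = O(n^{3/4})$ that holds for \emph{every} vertex $x$, and then simply multiplies by $\sum_x \degree{x} \le n^2$. That per-$x$ bound is the entire technical content of the lemma, and it is precisely what your proposal leaves unproved. Worse, your sketch of why such a bound should hold is wrong in two respects. First, you say $p(x,u_i)$ ``can be bounded below in terms of $|i - i^*(x)|$ and the degrees along the relevant stretch of $P$'' --- but a path that minimizes $p(x,u_i)$ need not travel along $P$ at all, so the degrees along $P$ are irrelevant to lower-bounding $p(x,u_i)$. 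Second, you claim $\sum_i e^{-p(x,u_i)}$ ``decays geometrically as $i$ moves away from $i^*(x)$''; it does not. The actual decay profile is far slower: the terms are $\Theta(1)$ for all $u_i$ within hop-distance roughly $n^{3/4}$ of $x$, and only past that scale does the exponential bite. It is exactly this slow decay that forces the $O(n^{3/4})$ (rather than $O(\sqrt{n})$ or $O(\log n)$) per-$x$ bound and hence the $11/4$ exponent.

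The two ideas you are missing, which the paper supplies, are: (i) along any $p$-minimizing path $Q$ from $x$ to $u_i$, no vertex of $G$ can have edges to more than $\sqrt{n}$ vertices of $Q$ (else one could shortcut $Q$ and strictly decrease $\sum_{z\in Q} 1/\degree{z}$), whence $\sum_{z\in Q}\degree{z} \le n^{3/2}$; and (ii) by AM--HM, $p(x,u_i) = \sum_{z\in Q} 1/\degree{z} \ge |Q|^2 / \sum_{z\in Q}\degree{z} \ge L^2/n^{3/2}$ where $L = \Delta(x,u_i)$. Combined with the observation that at most $O(L)$ path vertices lie within hop-distance $L$ of $x$ (since $P$ is shortest-hop), a dyadic decomposition over $L$ gives $\sum_i e^{-p(x,u_i)} = O(\sqrt{n}) + \sum_{2^j > \sqrt{n}} 2^j e^{-2^{2j}/n^{3/2}} = O(n^{3/4})$, with the sum peaking at $2^j \approx n^{3/4}$. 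Once this is in hand, your degree-threshold split is superfluous: multiplying by $\sum_x \degree{x} \le n^2$ already yields $O(n^{11/4})$ for all vertices at once. So the split is harmless but does not by itself do any work, and the part of the proposal that would have to do the work is both absent and, as sketched, heading in a direction (geometric decay, degrees along $P$) that cannot succeed.
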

\begin{proof}
By Lemma~\ref{lem:dominate} and Corollary~\ref{cor:inverse}, we have
\begin{eqnarray*}
\sum_{i = 0}^{|P| - 1} \degree{u_i} R(u_i) & \le  & \sum_{i = 0}^{|P| - 1} \degree{u_i} R(u_i)\\
& \le & \sum_{i = 0}^{|P| - 1} \degree{u_i}\left(\frac{\degree{u_i} + \sum_{x \neq u} \ws(x,u)\degree{x}}{\degree{u_i}}\right)\\
& \le & \sum_{i = 0}^{|P| - 1} \left(\degree{u_i} + \sum_{x\neq u_i} \degree{x} e^{-p(x,u_i)}\right)\\
& \le & 3n + \sum_{i=1}^{|P|-1} \sum_{x \neq u_i} \degree{x} e^{-p(x,u_i)}\\
& \le & 3n + \sum_{x \in V} \left(\degree{x} \sum_{i=1}^{|P|-1} e^{-p(x,u_i)}\right).
\end{eqnarray*}
(The second step follows from Corollary~\ref{cor:inverse}.  The third
step follows from Lemma~\ref{lem:ws}.  The fourth step follows from
the elementary claim that the sum of the degrees of vertices along any
shortest path in an $n$-vertex graph is at most $3n$; e.g.,
see~\cite[Proof of Theorem~2.1]{feige-rumor}.  The final step is
obtained by rearranging the summations.)

We next place an upper bound on $\sum_{x \in
V} \left(\degree{x} \sum_{i=1}^{|P|-1} e^{-p(x,u_i)}\right)$.  We
claim that the number of vertices $u_i$ in $P$ to which any vertex $x$
has paths of length at most $L$ is at most $2L$.  This is because two
vertices $a$ and $b$ that are more than $2L$ hops away in $P$ would
have a path of $2L$ hops via $x$, a contradiction.

Consider a vertex $u_i$, for some $i$.  Consider any path from $x$ to
$u_i$ that minimizes $p(x,u_i)$. No vertex in the graph has edges to
more than $\sqrt{n}$ vertices along this path.  If it did, then we
could short-cut and decrease the sum of the reciprocals of the degrees
$1/\sqrt{n} < \sqrt{n}/(n-1)$.  Therefore, any path that minimizes
$p(x,u_i)$ has a total degree of at most $n^{3/2}$.  If the length $L$
of the path is less than $\sqrt{n}$, then $p(x,u_i) \geq 0$; otherwise,
$p(x,u_i) \geq L^2/n^{3/2}$.

We now show that for any vertex $x$, $\sum_{i=1}^{|P|-1}
e^{-p(x,u_i)}$ is $O(n^{3/4})$.  We partition the $u_i$ according to
their hop-distance from $x$.  Fix $j$ and consider all vertices in $P$
that are distance between $2^{j-1}$ and $2^j$ from $x$.  By our claim
above, there are at most $2^j$ such vertices.  If $2^j \leq \sqrt{n}$,
then we use the trivial bound of $1$; otherwise, we use the bound
$2^{2j}/n^{3/2}$.  So we have
\begin{eqnarray*}
\sum_{u_i \in P} e^{-p(x,u_i)} & = & \sum_{j:  2^j \leq \sqrt{n}} O(2^j) + \sum_{j: 2^j > \sqrt{n}} \left(2^j *
  e^{-2^{2j}/n\ ^{3/2}}\right) \\
& = & O(n^{1/2}) + O(n^{3/4}) = O(n^{3/4}).
\end{eqnarray*}
Since $\degree{x}$ for each $x$ is at most $n-1$ and the length of $P$
is at most $n$, we get the desired bound of $O(n^{11/4})$.
\end{proof}

\begin{theorem}
\label{thm:cobra.general}
The hitting and cover times for a 2-cobra walk on a graph with $n$ vertices are $O(n^{11/4})$ and
$O(n^{11/4} \log n)$, respectively.  
\end{theorem}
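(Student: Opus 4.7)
The proof should be essentially a one-paragraph assembly of the pieces already in place. The plan is to first derive the hitting time bound by chaining the inequalities from Equation~\ref{eqn:hit} and Lemma~\ref{lem:bound}, and then to invoke Matthews' Theorem to lift the hitting time bound to a cover time bound.

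More concretely, I would start by fixing an arbitrary pair of vertices $u, v$ and choosing $P$ to be a shortest-hop path from $u$ to $v$. Equation~\ref{eqn:hit} already established that
\[
H(u,v) \le \sum_{i=0}^{|P|-1} \degree{u_i} R(u_i),
\]
and Lemma~\ref{lem:bound} showed that this sum is $O(n^{11/4})$. Composing the two inequalities, we get $H(u,v) = O(n^{11/4})$ for every pair $u,v$, which is the hitting time claim.

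For the cover time bound I would apply Matthews' Theorem (Theorem~\ref{matthewscobra}) as indicated in the introductory paragraph of the section: the expected cover time is at most a factor of $H_n = O(\log n)$ times the maximum hitting time, so the cover time is $O(n^{11/4} \log n)$. The high-probability version of Matthews' Theorem (which the authors mention gives a high-probability result) furnishes the same bound with high probability.

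There is no real obstacle in this step since all the heavy lifting was done in Lemma~\ref{lem:bound} (the argument that upper-bounds $\sum_i \degree{u_i} R(u_i)$ by partitioning path vertices into distance shells, bounding the number of close-by path vertices by $2L$, and using the $L^2/n^{3/2}$ lower bound on $p(x,u_i)$). The only small subtlety I would check carefully is that Matthews' Theorem is applicable to cobra walks --- this is exactly why the authors labeled their version \texttt{matthewscobra}, so I would simply cite that result rather than re-derive it.
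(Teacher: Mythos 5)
Your proposal matches the paper's proof: both fix $u,v$, take a shortest-hop path $P$, chain Equation~\ref{eqn:hit} with Lemma~\ref{lem:bound} to get $H(u,v) = O(n^{11/4})$, and then invoke Matthews' Theorem (Theorem~\ref{matthewscobra}) for the cover time. The paper additionally cites Lemma~\ref{lem:dominate} explicitly in the chain (to write $H(u,v) \le H^*(u,v) \le \cdots$), but since Equation~\ref{eqn:hit} already incorporates that dominance, this is only a presentational difference.
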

\begin{proof}
For any start vertex $u$ and target vertex $v$, let $P$ be a
shortest-hop path from $u$ to $v$, given by the sequence of vertices
$u = u_0, u_1, u_2, \ldots, u_{|P|} = v$.  By
Lemma~\ref{lem:dominate} and Equation~\ref{eqn:hit} we have
\[
H(u,v) \le H^*(u,v) \le \sum_{i = 0}^{|P| - 1} \degree{u_i} R(u_i),
\] 
which, by Lemma~\ref{lem:bound}, is $O(n^{11/4})$.  The bound on the
cover time follows from Matthews' Theorem, Theorem~\ref{matthewscobra}.
\end{proof}

\section{Conclusion}

We have derived several improved results for cobra walks.  However,
there remain many open problems, and corresponding gaps in our
understanding.  We have shown that for $d$-dimensional grids on
$[0,n]^d$ the cover time is proportional to $n$, even for 2-cobra
walks.  The dependence on $d$, however, has not been determined, and more
generally we would like to find a larger collection of graph types for
which the cover time for 2-cobra walks is proportional to the
diameter.  Better results for general graphs are still open; we
optimistically conjecture that for 2-cobra walks the worst case cover
time on a graph with $n$ vertices is only $O(n \log n)$.  (The star
graph shows that it can be $\Omega(n \log n)$.)  Perhaps most
importantly, we believe there should be better methods available for
analyzing cobra walks.  Our results utilize techniques based on
parallelism and bias, but we do not believe they are as yet taking
full advantage of the power of cobra walks.

\vspace{-0.2in}

\newpage

\bibliographystyle{plain}
\bibliography{refs2,Distributed-RW2,expander,papers2,epidemics2}

\begin{thebibliography}{10}

\bibitem{AHKV03}
Micah Adler, Eran Halperin, Richard~M. Karp, and Vijay~V. Vazirani.
\newblock A stochastic process on the hypercube with applications to
  peer-to-peer networks.
\newblock In {\em Proceedings of the Thirty-fifth Annual ACM Symposium on
  Theory of Computing}, STOC '03, pages 575--584, New York, NY, USA, 2003. ACM.

\bibitem{AAKKLT}
Noga Alon, Chen Avin, Michal Koucky, Gady Kozma, Zvi Lotker, and Mark~R.
  Tuttle.
\newblock Many random walks are faster than one.
\newblock In {\em Proceedings of the Twentieth Annual Symposium on Parallelism
  in Algorithms and Architectures}, SPAA '08, pages 119--128, New York, NY,
  USA, 2008. ACM.

\bibitem{arthreya2005branching}
Siva~R. Arthreya and Jan~M Swart.
\newblock Branching-coalescing particle systems.
\newblock {\em Probability Theory and Related Fields}, 131(3):376--414, 2005.

\bibitem{7354403}
J.~Augustine, G.~Pandurangan, P.~Robinson, S.~Roche, and E.~Upfal.
\newblock Enabling robust and efficient distributed computation in dynamic
  peer-to-peer networks.
\newblock In {\em Foundations of Computer Science (FOCS), 2015 IEEE 56th Annual
  Symposium on}, pages 350--369, Oct 2015.

\bibitem{azar}
Yossi Azar, Andrei~Z Broder, Anna~R Karlin, Nathan Linial, and Steven Phillips.
\newblock Biased random walks.
\newblock In {\em Proceedings of the twenty-fourth annual ACM symposium on
  Theory of computing}, pages 1--9. ACM, 1992.

\bibitem{benjamini2010trace}
Itai Benjamini and Sebastian M{\"u}ller.
\newblock On the trace of branching random walks.
\newblock {\em arXiv preprint arXiv:1002.2781}, 2010.

\bibitem{berenbrink}
Petra Berenbrink, Colin Cooper, Robert Els\"{a}sser, Tomasz Radzik, and Thomas
  Sauerwald.
\newblock Speeding up random walks with neighborhood exploration.
\newblock In {\em Proceedings of the Twenty-first Annual ACM-SIAM Symposium on
  Discrete Algorithms}, SODA '10, pages 1422--1435, Philadelphia, PA, USA,
  2010. Society for Industrial and Applied Mathematics.

\bibitem{broder}
Andrei Broder.
\newblock Generating random spanning trees.
\newblock In {\em Foundations of Computer Science, 1989., 30th Annual Symposium
  on}, pages 442--447, Oct 1989.

\bibitem{sicomp}
Jen-Yeu Chen and Gopal Pandurangan.
\newblock Almost-optimal gossip-based aggregate computation.
\newblock {\em SIAM J. Comput.}, 41(3):455--483, 2012.

\bibitem{Chung05laplaciansand}
F.~Chung.
\newblock Laplacians and the cheeger inequality for directed graphs.
\newblock {\em Annals of Combinatorics}, 9:1--19, 2005.

\bibitem{cooper2012coalescing}
Colin Cooper, Robert Els\"{a}sser, Hirotaka Ono, and Tomasz Radzik.
\newblock Coalescing random walks and voting on graphs.
\newblock In {\em Proceedings of the 2012 ACM Symposium on Principles of
  Distributed Computing}, PODC '12, pages 47--56, New York, NY, USA, 2012. ACM.

\bibitem{DP05}
Nedialko~B. Dimitrov and C.~Greg Plaxton.
\newblock Optimal cover time for a graph-based coupon collector process.
\newblock In {\em Proceedings of the 32Nd International Conference on Automata,
  Languages and Programming}, ICALP'05, pages 702--716, Berlin, Heidelberg,
  2005. Springer-Verlag.

\bibitem{Dutta:2015:CRW:2821462.2817830}
Chinmoy Dutta, Gopal Pandurangan, Rajmohan Rajaraman, and Scott Roche.
\newblock Coalescing-branching random walks on graphs.
\newblock {\em ACM Trans. Parallel Comput.}, 2(3):20:1--20:29, November 2015.

\bibitem{ElsasserS09}
Robert Elsasser and Thomas Sauerwald.
\newblock Tight bounds for the cover time of multiple random walks.
\newblock volume 412, pages 2623 -- 2641, 2011.
\newblock Selected Papers from 36th International Colloquium on Automata,
  Languages and Programming (ICALP 2009).

\bibitem{feige1}
U.~Feige.
\newblock A tight upper bound on the cover time for random walks on graphs.
\newblock 1993.

\bibitem{feige2}
Uriel Feige.
\newblock A tight lower bound on the cover time for random walks on graphs.
\newblock {\em Random Struct. Algorithms}, 6(4):433--438, July 1995.

\bibitem{feige-rumor}
Uriel Feige, David Peleg, Prabhakar Raghavan, and Eli Upfal.
\newblock Randomized broadcast in networks.
\newblock {\em Random Structures \& Algorithms}, pages 447--460, 1990.

\bibitem{GANESH}
Ayalvadi~J. Ganesh, Laurent Massouli, and Donald~F. Towsley.
\newblock The effect of network topology on the spread of epidemics.
\newblock In {\em INFOCOM}, pages 1455--1466. IEEE, 2005.

\bibitem{MR0163361}
Theodore~E. Harris.
\newblock {\em The theory of branching processes}.
\newblock Die Grundlehren der Mathematischen Wissenschaften, Bd. 119.
  Springer-Verlag, Berlin, 1963.

\bibitem{KES}
David~A. Kessler.
\newblock Epidemic size in the sis model of endemic infections.
\newblock {\em Journal of Applied Probability}, 45(3):757--778, 09 2008.

\bibitem{LevinPeresWilmer2006}
David~Asher Levin, Yuval Peres, and Elizabeth~Lee Wilmer.
\newblock {\em Markov chains and mixing times}.
\newblock Providence, R.I. American Mathematical Society, 2009.
\newblock With a chapter on coupling from the past by James G. Propp and David
  B. Wilson.

\bibitem{Madras1992255}
Neal Madras and Rinaldo Schinazi.
\newblock Branching random walks on trees.
\newblock {\em Stochastic Processes and their Applications}, 42(2):255 -- 267,
  1992.

\bibitem{metropolis}
M.~Metropolis, A.~Rosenbluth, M.~Rosenbluth, A.~Teller, and M.~Teller.
\newblock Equation of state calculations by fast computing machines.
\newblock {\em Journal of Chemical Physics}, 21:1087--1092, 1953.

\bibitem{MU}
Michael Mitzenmacher and Eli Upfal.
\newblock {\em Probability and computing: Randomized algorithms and
  probabilistic analysis}.
\newblock Cambridge University Press, 2005.

\bibitem{Spielman}
Daniel Spielman.
\newblock Lecture notes in spectral graph theory, 2012.

\bibitem{sun2008brownian}
Rongfeng Sun and Jan~M Swart.
\newblock The brownian net.
\newblock {\em The Annals of Probability}, 36(3):1153--1208, 2008.

\bibitem{TCS-010}
Salil~P. Vadhan.
\newblock Pseudorandomness.
\newblock {\em Foundations and Trends in Theoretical Computer Science},
  7(1-3):1--336, 2011.

\bibitem{PIET}
Piet Van~Mieghem.
\newblock The n-intertwined sis epidemic network model.
\newblock {\em Computing}, 93(2-4):147--169, December 2011.

\end{thebibliography}

\end{document}